\documentclass[a4paper]{scrartcl}

\usepackage[utf8]{inputenc}
\usepackage[T1]{fontenc}
\usepackage{lmodern}
\usepackage{subcaption}
\usepackage{tikz}

\usepackage{amsmath}
\usepackage{amssymb}
\usepackage{amsthm}

\theoremstyle{theorem}
\newtheorem{theorem}{Theorem}
\newtheorem{lemma}[theorem]{Lemma}
\newtheorem{corollary}[theorem]{Corollary}

\theoremstyle{definition}
\newtheorem{remark}[theorem]{Remark}
\newtheorem{example}[theorem]{Example}

\DeclareMathOperator{\st}{\,s.t.}

\DeclareMathOperator{\argmax}{argmax}

\usepackage[ruled,vlined,linesnumbered]{algorithm2e}
\SetKwInOut{Input}{Input}
\SetKwInOut{Output}{Output}
\DontPrintSemicolon

\usepackage{hyperref}

\begin{document}

\title{The robust bilevel continuous knapsack problem with uncertain
  coefficients in the follower's objective\thanks{This work has
    partially been supported by Deutsche Forschungsgemeinschaft (DFG)
    under grants no.~BU 2313/2 and~BU 2313/6. This version of the
    article has been accepted for publication. The Version of Record
    is available online at:
    \url{http://dx.doi.org/10.1007/s10898-021-01117-9}}}

\author{Christoph Buchheim, Dorothee Henke \\[1ex]
  Department of Mathematics, TU Dortmund University, Germany \\[1ex]
  christoph.buchheim,dorothee.henke@math.tu-dortmund.de}

\date{}

\maketitle

\begin{abstract}
  We consider a bilevel continuous knapsack problem where the
  leader controls the capacity of the knapsack and the follower
  chooses an optimal packing according to his own profits, which may
  differ from those of the leader. To this bilevel problem, we add
  uncertainty in a natural way, assuming that the leader does not have
  full knowledge about the follower's problem. More precisely,
  adopting the robust optimization approach and assuming that the
  follower's profits belong to a given uncertainty set, our aim is to
  compute a solution that optimizes the worst-case
  follower's reaction from the leader's perspective. By
  investigating the complexity of this problem with respect to
  different types of uncertainty sets, we make first steps towards
  better understanding the combination of bilevel optimization
  and robust combinatorial optimization. We show that the problem
  can be solved in polynomial time for both discrete and interval
  uncertainty, but that the same problem becomes NP-hard when each
  coefficient can independently assume only a finite number of
  values. In particular, this demonstrates that replacing uncertainty
  sets by their convex hulls may change the problem significantly, in
  contrast to the situation in classical single-level robust
  optimization. For general polytopal uncertainty, the problem again
  turns out to be NP-hard, and the same is true for ellipsoidal
  uncertainty even in the uncorrelated case. All presented hardness
  results already apply to the evaluation of the leader's objective
  function.

  Keywords: bilevel optimization, robust optimization, interval order
\end{abstract}

\section{Introduction}

Bilevel optimization has received increasing attention in the last
decades. The aim is to model situations where certain decisions are
taken by a so-called leader, but then one or more followers optimize
their own objective functions subject to the choices of the
leader. The follower's decisions in turn influence the leader's
objective, or even the feasibility of her decisions. The objective is
to determine an optimal decision from the leader's perspective. In
general, bilevel optimization problems are very hard to solve. Even in
the case that both objectives and all constraints are linear, the
bilevel problem turns out to be strongly NP-hard in
general~\cite{Hansenetal}. Several surveys and books on bilevel
optimization have been published recently,
e.g.,~\cite{Colsonetal,Dempe_bibliography,Dempeetal}.

Our research is motivated by complexity questions. We investigate
whether -- and in which cases -- taking uncertainty of some problem
parameters into account renders a bilevel optimization problem
significantly harder, where we focus on the robust optimization
approach. In this approach, the uncertain parameters are specified by
so-called uncertainty sets which contain all possible (or likely)
values the parameters can attain; the aim is to find a solution that
is feasible in each of these scenarios and that optimizes the worst
case. Typical types of uncertainty sets that are used in robust
optimization are finite sets, (general or special) polytopes, and
ellipsoids.

In our investigation of the bilevel continuous knapsack problem, we
only consider uncertainty in the coefficients of the follower's
objective function. This is a very natural setting in bilevel
optimization, as in practice the leader often does not know the
follower's objective function precisely.
A few other possible combinations of bilevel and robust optimization
should be mentioned, in contrast to ours, as it can make a big
difference what parameters are uncertain and from whose
perspective. As already mentioned, we will investigate the case of an
uncertain follower's objective function from the leader's
perspective. One could also look at uncertainty in the leader's
objective function (still from the leader's perspective) which
corresponds to the single-level robust optimization setting with
uncertain objective, ignoring the bilevel structure of the leader's
problem. Yet another option would be to make the follower's problem
uncertain in the single-level robust optimization sense, i.e., to
assume that the follower's objective function is uncertain to the
follower himself. In our opinion, the setting we chose is the most
natural one and also the most interesting one because no results from
single-level robust optimization can be applied directly.

Due to the hardness of deterministic bilevel optimization, it is not
surprising that relatively few articles dealing with bilevel
optimization problems under uncertainty, in particular using the
robust optimization approach, have been published so far.
In~\cite{ChuongJeyakumar}, the authors consider bilevel problems with
linear constraints and a linear follower's objective, while the
leader's objective is a polynomial. The robust counterpart of the
problem, with interval uncertainty in the leader's and the follower's
constraints, is solved via a sequence of semidefinite programming
relaxations. In~\cite{SariddichainuntaInuiguchi}, a bilevel linear
programming problem with the follower's objective vector lying in a
polytopal uncertainty set is considered. A vertex enumeration approach
is combined with a global optimality test based on an inner
approximation method.  Similar models have also been considered in a
game-theoretic context; see, e.g.,~\cite{HuFukushima}. More literature
has been published on the stochastic optimization approach to bilevel
optimization under uncertainty, where (some of) the problem's
parameters are assumed to be random variables and the aim is to
determine a solution optimizing the expected objective value; see,
e.g.,~\cite{Henkel} and the references therein.

Contrary to our setting of an uncertain follower's objective
function, one can assume that the leader knows the follower's
objective function precisely, but that the follower does not
necessarily choose his optimal solution. This leads to a closely
related mathematical model, which corresponds to situations where
the follower acts with so-called bounded rationality; this framework
is also known as decision variable uncertainty.  Multi\-level
optimization problems in which lower-level decision makers might
deviate from their optimal values by a small amount are investigated
in~\cite{Besancon2019,Besancon2020} under the notion of
``near-optimal robustness''. The authors show that the complexity
does not increase when adding this type of uncertainty to a
model~\cite{Besancon2020}. Similarly, in~\cite{Zare2018}, it is
assumed that the follower may choose a solution that is suboptimal
to some degree in order to worsen the leader's objective value. The
setting of~\cite{Zare2018} can be seen as a generalization of both
pessimistic bilevel optimization and single-level robust
optimization.

In classical single-level robust optimization, even in the case of
uncertainty concerning only the objective function, some classes of
uncertainty sets may lead to substantially harder robust counterparts,
e.g., finite uncertainty sets in the context of combinatorial
optimization~\cite{kouvelis}. In other cases, the robust counterparts
can be solved by an efficient reduction to the under\-lying certain
problem. This is true in particular for the case of interval
uncertainty, where each coefficient may vary independently within some
given range. In this case, each interval may be replaced by one of its
endpoints, depending on the direction of optimization, so that the
robust counterpart is not harder than the underlying certain
problem. For an overview of complexity results in robust combinatorial
optimization under objective uncertainty, we refer the reader to the
recent survey~\cite{survey} and the references therein.
We will show that the situation in case of interval uncertainty is
more complicated in our robust bilevel optimization setting.

We concentrate on a bilevel continuous knapsack problem where the
leader only controls the capacity, while the follower chooses the
items, optimizing his own objective function. Without uncertainty,
this problem is easy to solve, which makes it a good starting point
for addressing the question of how much harder bilevel optimization
problems can become when adding uncertainty.  As mentioned above,
linear bilevel optimization problems are NP-hard in general. As a
general approach, not using any problem-specific information, bilevel
problems are usually reformulated as single-level problems before
being solved. For example, if the follower's problem is a linear
program, it can be replaced by its Karush-Kuhn-Tucker conditions,
allowing for a branch-and-bound or branch-and-cut approach that
branches over the resulting complementarity constraints; see, e.g.,
\cite{FortunyAmatMcCarl,Kleinertetal}. However, the bilevel continuous
knapsack problem can be solved in polynomial time by a direct
combinatorial algorithm; see Section~\ref{section_withoutuncertainty}.

We only deal with the continuous problem version, i.e., the follower
solving a continuous knapsack problem, which is also discussed
in~\cite{Dempeetal} as an introductory example. The variant of this
problem where the follower solves a binary knapsack problem instead is
introduced in~\cite{DempeRichter} and further investigated
in~\cite{BrotcorneHanafiMansi}. Both articles devise pseudopolynomial
algorithms for the problem. Other variants of bilevel knapsack
problems include one where leader and follower each control a subset
of the items and pack them into a common knapsack of fixed
capacity~\cite{Mansietal}, and one where leader and follower each have
their own knapsack, but choose items from a common item
set~\cite{DeNegre}. All three variants are also investigated
in~\cite{Capraraetal} in terms of complexity and approximability.
Algorithms for the continuous version of the problem in~\cite{DeNegre}
can be found in~\cite{CarvalhoLodiMarcotte,FischerWoeginger}.  A
stochastic version of the bilevel knapsack problem
of~\cite{DempeRichter} is considered in~\cite{Ozaltin}: it is assumed
that the capacity in the follower's problem depends on the leader's
decision as well as an uncertain value, which is drawn from a discrete
probability distribution.

A practical motivation to investigate the bilevel continuous knapsack
problem under uncertainty is the possible application in revenue
management that is also described in~\cite{BrotcorneHanafiMansi}
and~\cite{Mansietal} for the respective variants of bilevel knapsack
problems: the leader, who might be an investor or a company, hands a
part of her savings over to the follower, an intermediary, who invests
it, maximizing his own profit and giving back part of the returns to
the leader. It is very natural to consider continuous variables in
this application. Moreover, it is reasonable to consider the
follower's objective uncertain, due to the inherent risk of
investments and hence the investor's uncertainty about the
intermediary's assessment of the expected returns.

While the certain bilevel continuous knapsack problem is easily
solvable, it becomes much more involved when adding uncertainty in
the follower's objective. It turns out that the interval
uncertainty setting requires to deal with partial orders, more
precisely, with the interval orders induced by the relations between
the follower's profit ranges. Adapting an algorithm by
Woeginger~\cite{Woeginger} for some precedence constraint knapsack
problem, we show that the problem can still be solved in polynomial
time. We also discuss why the case of finite uncertainty sets is
tractable as well.  For many other types of uncertainty sets, the
robust bilevel continuous knapsack problem turns out to be NP-hard,
and the same is true even for the problem of evaluating the leader's
objective function, i.e., the adversary's optimization problem.
We are able to show this for all common types of uncertainty sets
and even special cases of them: we consider simplices as uncertainty
sets, thus settling the more general cases of polytopal and
Gamma-uncertainty, and uncertainty sets defined by norms, which
includes the case of ellipsoidal uncertainty. For an overview and
formal definitions of these classes of uncertainty sets, see,
e.g.,~\cite{survey}.

Our results also emphasize another significant difference to classical
robust optimization: in general, it is not possible anymore to replace
the uncertainty set by its convex hull and thus assume convexity
without loss of generality. In fact, when restricting the possible
scenarios in the interval case to the endpoints of the intervals, we
show that the problem turns NP-hard. More precisely, the problem is
NP-hard when the input consists of a finite set of realizations for
each coefficient and these realizations arise independently.

\enlargethispage{\baselineskip}

The remainder of this paper is organized as follows. In
Section~\ref{section_withoutuncertainty}, we introduce and discuss the
certain variant of the problem. We then settle the cases of finite
uncertainty sets in Section~\ref{section_scenariouncertainty} and
interval uncertainty in Section~\ref{section_intervaluncertainty}. The
discrete uncorrelated case, where each coefficient varies in a finite
set independently, is discussed in
Section~\ref{section_discreteuncorrelateduncertainty}. Finally,
uncertainty sets defined as simplices
(Section~\ref{section_simplicialuncertainty}) or by norms
(Section~\ref{section_normuncertainty}) are
discussed. Section~\ref{section_conclusion} concludes.

\section{Underlying certain problem}\label{section_withoutuncertainty}

\subsection{Problem formulation}

We first discuss the deterministic variant of the bilevel optimization
problem under consideration, in which the follower solves a continuous
knapsack problem, while the leader determines the knapsack's capacity
and optimizes another linear objective function than the
follower. This problem is also discussed in~\cite{Dempeetal}, but we
replicate the formulation and the algorithm here for sake of
completeness.

First recall that an important issue in bilevel optimization is that
the follower's optimal solution is not necessarily unique, but the
choice among the optimal solutions might have an impact on the
leader. The two main approaches here are the optimistic and the
pessimistic one. In the former case, the follower is assumed to decide
in favor of the leader, whereas in the latter case, he chooses the
optimal solution that is worst for the leader. For more details, see
e.g.,~\cite{Dempeetal,Wiesemannetal}.  While often only the optimistic
approach is considered, we will focus on the pessimistic one in this
paper, since it combines more naturally with the concept of robustness
that will be added to the problem later on. Indeed, both the
pessimistic view of the follower and the robustness force the leader
to consider the worst case with respect to some set of
choices. However, all our results hold for the optimistic approach as
well, making only small changes to the proofs necessary, which will be
sketched for every result.

The pessimistic version of the problem, without uncertainty, can be
formulated as follows, where the minimization over $x$ represents the
leader's pessimism about the follower's choice among his optimal
solutions:
\begin{equation}\tag{P}\label{certainprob}
\begin{aligned}
  & \max_{b\in [b^-,b^+]} & & \min_{x} & & d^\top x - \delta b\\
  & & & \st & & x \in
  \begin{aligned}[t]
    & \argmax & & c^\top x\\
    & \st & & a^\top x \leq b \\
    & & & 0 \leq x \leq 1 \\
  \end{aligned} \\
\end{aligned}
\end{equation}
The leader's only variable is $b \in \mathbb{R}$, which can be
considered as the knapsack's capacity. The follower's variables are $x
\in \mathbb{R}^n$, i.e., the follower fills the knapsack with a subset
of the objects, where also fractions are allowed. The item sizes $a
\in \mathbb{R}_{\geq 0}^n$, the follower's item values~$c \in
\mathbb{R}_{> 0}^n$, the capacity bounds $b^-, b^+ \in \mathbb{R}$ as
well as the leader's item values $d \in \mathbb{R}_{\geq 0}^n$ and a
number~$\delta \geq 0$ are given. The latter can be thought of as a
price the leader has to pay for providing one unit of knapsack
capacity.  We may assume $0 \leq b^- \leq b^+ \leq \sum_{i = 1}^n a_i$
and~$a > 0$.

For a given leader's choice $b$, due to the assumptions~$0 \leq b \leq
\sum_{i = 1}^n a_i$ and $c > 0$, every optimal solution of the
follower's problem satisfies $a^\top x = b$. This can be used to show
that the following three ways to formulate the leader's objective
function are equivalent:
\begin{itemize}
\item[(a)] $d \in \mathbb{R}^n$ and $\delta \in \mathbb{R}$, the most
  general variant,
\item[(b)] $d \in \mathbb{R}_{\geq 0}^n$ and $\delta \geq 0$, as in the
  problem formulation above, and
\item[(c)] $d \in \mathbb{R}^n$ and $\delta = 0$, which we will use
  in the following.
\end{itemize}
Clearly, (b) and (c) are special cases of (a). An objective function $d^\top
x - \delta b$ of type~(a) can be reformulated in the form of (c):
$$d^\top x - \delta b = d^\top x - \delta (a^\top x) = (d - \delta a)^\top x$$
and similarly in the form of (b):
$$d^\top x - \delta b = (d - \delta a)^\top x = (d - \varepsilon a - (\delta - \varepsilon) a)^\top x = (d - \varepsilon a)^\top x - (\delta - \varepsilon) b,$$
where $\varepsilon := \min\{\delta, \min_{i \in \{1, \dots,
  n\}}\frac{d_i}{a_i}\}$.  This proves that all three formulations are
equivalent.  Note that this equivalence still holds for the uncertain
problem versions considered later on.  As already indicated, we will
use the third formulation from now on because it is the most compact
one, i.e., we omit $\delta$ and assume $d \in \mathbb{R}^n$.

\subsection{Solution algorithm}

The follower solves a continuous knapsack problem with fixed capacity
$b$. This can be done, for example, using Dantzig's
algorithm~\cite{Dantzig}: by first sorting the items, we may assume
$$\tfrac{c_1}{a_1} \geq \dots \geq \tfrac{c_n}{a_n}\;.$$
The idea is then to
pack the items into the knapsack in this order until it is full. More
formally, if $b = \sum_{i = 1}^n a_i$, everything can be taken, so the
optimal solution is $x_i = 1$ for all~$i \in \{1, \dots,
n\}$. Otherwise, we consider the critical item
\[\textstyle k := \min\big\{j \in \{1, \dots, n\} \colon \sum_{i = 1}^j a_i > b\big\}\]
and an optimal solution is given by
\[x_j:=\begin{cases}\begin{array}{ll}
  1 & \text{ for }j \in \{1, \dots, k - 1\}\\
  \tfrac{1}{a_k}\big(b - \sum_{i = 1}^{k - 1} a_i\big) & \text{ for }j=k\\
  0 & \text{ for }j \in \{k+1, \dots, n\}\;.
\end{array}\end{cases}\]

We now turn to the leader's perspective. As only the critical
item~$k$, but not the sorting depends on $b$, the leader can just
compute the described order of items by the values $\frac{c_i}{a_i}$
and then consider the behavior of the follower's optimal solution~$x$
when $b$ changes. The leader's objective function $f$ is given by the
corresponding values $d^\top x$:
$$f(b) := \sum_{i = 1}^{j - 1} d_i + \frac{d_j}{a_j}\Big(b - \sum_{i = 1}^{j - 1} a_i\Big) \text{ for }b\in\Big[\sum_{i = 1}^{j - 1} a_i,\sum_{i = 1}^{j} a_i\Big],~j\in\{1,\dots,n\}$$
Note that this piecewise linear function is well-defined and
continuous with vertices in the points $b = \sum_{i = 1}^j a_i$, in
which the critical item changes from $j$ to $j + 1$.  The leader has
to maximize $f$ over the range $[b^-, b^+]$. As $f$ is piecewise
linear, it suffices to evaluate it at the boundary points~$b^-$
and~$b^+$ and at the feasible vertices~$b = \sum_{i = 1}^j
a_i$ for all $j \in \{0, \dots, n\}$ with~$\sum_{i = 1}^j
a_i\in[b^-,b^+]$. Hence, Problem~\eqref{certainprob} can be solved in
time $\mathcal{O}(n \log n)$, which is the time needed for sorting.

\begin{remark}
  The order of items and hence the follower's optimal
  solution is not unique if the profits $c_i/a_i$ are not all
  different. In the optimistic approach, a follower would sort the
  items with the same profit in descending order of the values
  $d_i/a_i$, in the pessimistic setting in ascending order. If this is
  still not unique, there is no difference for the leader either.
\end{remark}

\section{Discrete uncertainty} \label{section_scenariouncertainty}

Turning to the uncertain problem variant, we first consider the robust
version of the problem where the follower's objective function is
uncertain for the leader, and this uncertainty is given by a finite
uncertainty set~$U \subset \mathbb{R}_{> 0}^n$ containing the possible
objective vectors~$c$.  We obtain the following problem formulation:
\begin{equation*}
\begin{aligned}
  & \max_{b\in [b^-,b^+]} & & \min_{c, x} & & d^\top x \\[-1ex]
  & & & \st & & c \in U \\
  & & & & & x \in
  \begin{aligned}[t]
    & \argmax & & c^\top x\\
    & \st & & a^\top x \leq b \\
    & & & 0 \leq x \leq 1 \\
  \end{aligned} \\
\end{aligned}
\end{equation*}
The inner minimization problem can be interpreted as being controlled
by an adversary, thus leading to an optimization problem involving
three actors: first, the leader takes her decision~$b$, then the
adversary chooses a follower's objective~$c$ leading to a follower's
solution that is worst possible for the leader, and finally the
follower optimizes this objective choosing~$x$. In the pessimistic
view of the bilevel problem, the adversary can be assumed to also
choose among the follower's optimal solutions if this is not unique,
i.e., to minimize over $c$ as well as $x$.

Note that the robust uncertainty we consider really concerns the
interplay of leader and follower in the bilevel problem structure and
that the setting differs greatly from just combining a bilevel
optimization problem with a robust optimization problem. In
particular, the adversary's problem is not the same as in a continuous
knapsack problem under uncertainty because he is the leader's and not
the follower's adversary, having the inverse of the leader's objective
as his objective function.

Again, we aim at solving the problem from the leader's perspective,
which can be done as follows: for every $c \in U$, consider the
piecewise linear function~$f_c$ as described in
Section~\ref{section_withoutuncertainty}. The vertices of each~$f_c$
can be computed in~$\mathcal{O}(n \log n)$ time, both in the
optimistic and the pessimistic approach. The leader's objective
function is then the pointwise minimum~$f:=\min_{c\in U}f_c$ and her
task is to maximize $f$ over~$[b^-,b^+]$.
The function $f$ can be seen as the lower envelope
of $|U| n$ many linear segments, corresponding to the pieces of the
functions $f_c$, and can thus be computed in time $\mathcal{O}(|U|n
\log(|U|n))$~\cite{Hershberger}. This proves

\begin{theorem}\label{theorem_finite}
  The robust bilevel continuous knapsack problem with finite
  uncertainty set~$U$ can be solved in $\mathcal{O}(|U|n \log(|U|n))$
  time.
\end{theorem}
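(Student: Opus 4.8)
The plan is to reduce the robust problem, for any fixed choice $b$ of the leader, to evaluating the lower envelope $f := \min_{c \in U} f_c$ of the family of piecewise linear functions $f_c$ introduced in Section~\ref{section_withoutuncertainty}, and then to invoke the known near-linear bounds on the size and computation of lower envelopes of line segments.

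First I would fix $b \in [b^-, b^+]$ and show that the inner minimization $\min_{c, x}$ equals $f(b)$. For each fixed $c \in U$ the follower solves the continuous knapsack problem analysed in Section~\ref{section_withoutuncertainty}; among his optimal solutions, the pessimistic adversary (who minimizes $d^\top x$ jointly over $c$ and $x$) selects the one of smallest leader value, which is exactly $f_c(b)$ once $f_c$ is built from the item order sorted by $c_i/a_i$ with ties broken in ascending order of $d_i/a_i$ — and in the optimistic variant in descending order, as in the Remark. Taking in addition the minimum over $c \in U$ yields $\min_{c \in U} f_c(b) = f(b)$, so the leader's task is precisely to maximize the pointwise minimum $f$ over $[b^-, b^+]$.

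Second I would bound the combinatorial size of $f$ and the cost of producing it. Each $f_c$ is continuous and piecewise linear with exactly $n$ linear pieces, whose breakpoints are the partial sums $\sum_{i=1}^{j} a_i$; the sorted order and the corresponding slopes can be computed in $\mathcal{O}(n \log n)$ time by one sort, so over all $c \in U$ we obtain $m := |U|\,n$ linear segments in $\mathcal{O}(|U|\,n \log n)$ time. By the Davenport–Schinzel bounds used for lower envelopes (Hart–Sharir~\cite{HartSharir}), $f$ consists of $\mathcal{O}(m\,\alpha(m))$ pieces, and by Hershberger~\cite{Hershberger} it can be computed in $\mathcal{O}(m \log m) = \mathcal{O}(|U|\,n \log(|U|\,n))$ time.

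Finally, since $f$ is piecewise linear its maximum over $[b^-, b^+]$ is attained at $b^-$, at $b^+$, or at one of the $\mathcal{O}(|U|\,n\,\alpha(|U|\,n))$ breakpoints of $f$ inside that interval; scanning these costs time linear in the number of pieces and is dominated by the envelope computation. Adding the three phases gives the claimed bound. I expect the only genuine subtlety to be the first step — verifying that the joint pessimistic (or optimistic) choice of the adversary over $c$ and of the follower over $x$ collapses to the single function $f_c$ with the stated tie-breaking rule — whereas the remaining steps are a routine assembly of the knapsack analysis of Section~\ref{section_withoutuncertainty} with the cited lower-envelope results.
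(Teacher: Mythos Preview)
Your proposal is correct and follows essentially the same approach as the paper: compute each piecewise linear function $f_c$ in $\mathcal{O}(n\log n)$ time, take the lower envelope of the resulting $|U|n$ segments via Hershberger's $\mathcal{O}(m\log m)$ algorithm, and maximize over $[b^-,b^+]$ by scanning the breakpoints. The only difference is that you spell out the tie-breaking argument for the pessimistic/optimistic reduction to $f_c$ a bit more explicitly, which the paper delegates to the earlier Remark.
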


Note that the tractability of the robust bilevel continuous knapsack
problem with finite uncertainty set~$U$ can also be obtained in a
straightforward way, at the cost of a higher running time: in order to
maximize the piecewise linear function~$f$ over~$[b^-,b^+]$, it
suffices to consider the endpoints~$b^-,b^+$, the vertices of the
function~$f_c$ for each~$c\in U$, and the intersection between each
pair of linear pieces of the functions~$f_c$ and~$f_{c'}$
with~$c,c'\in U$. Obviously, all these candidates can be examined in
polynomial time.

\section{Interval uncertainty}\label{section_intervaluncertainty}

We next address a robust version of the problem having the same
structure as in Section~\ref{section_scenariouncertainty}, but now the
uncertainty is given by an interval for each component of~$c$. We thus
consider~$U = [c_1^-, c_1^+] \times \dots \times [c_n^-, c_n^+]$ and
assume $0 < c^- \leq c^+$.  In classical robust optimization, one
could just replace each uncertain coefficient~$c_i$ by an appropriate
endpoint $c_i^-$ or $c_i^+$ and obtain a certain problem
again. However, such a replacement is not a valid reformulation in the
bilevel context. We will show that, in fact, the situation in the
bilevel case is more complicated, even though we can still devise an
efficient algorithm.

To simplify the notation, we define
$p^-_i:=-c_i^+/a_i$ and $p^+_i:=-c_i^-/a_i$ for
the remainder of this section. It turns out that interval orders
defined by the intervals~$[p^-_i,p^+_i]$ play a crucial role in the
investigation.

\subsection{Interval orders and precedence constraint knapsack problems}

For the leader, the exact entries of~$c_i$ in their
intervals~$[c^-_i,c^+_i]$ do not matter, but only the induced sorting
that the follower will use. The follower sorts the items by their
values $c_i/a_i$ and we therefore have to consider the
intervals~$[c_i^-/a_i, c_i^+/a_i]$ induced by the uncertainty
set. Intuitively speaking, two situations can arise for a pair $(i,
j)$ of items: either, their corresponding intervals are disjoint, say,
$c_i^-/a_i > c_j^+/a_j$. In this case, $i$ will precede $j$ in every
sorting induced by some adversary's choice $c \in U$. Otherwise, the
two intervals intersect. Then, the adversary can decide which of the
two items comes first by choosing the values $c_i$ and $c_j$
appropriately.

More formally, given $U$ and $a$, the possible sortings are exactly
the linear extensions of the partial order~$P$ that is induced by the
intervals~$[p_i^-, p_i^+]$, in the sense that we set
\[i <_P j \quad:\Leftrightarrow \quad \tfrac{c_i^-}{a_i} >
\tfrac{c_j^+}{a_j} \quad \Leftrightarrow \quad p_i^+ < p_j^-\;.\] Such
a partial order is called an \emph{interval order}.  Note that the
values $c_i/a_i$ are actually sorted in decreasing order by the
follower, but it is more common to read an interval order from left to
right. Therefore, by the definition of $p_i^-$ and $p_i^+$, the
intervals were flipped, so that we can think of the follower sorting
the negative values $-c_i/a_i$ increasingly.

Note that, in the pessimistic problem version, intervals intersecting
in only one point do not have to be treated differently from intervals
intersecting properly, since the adversary choosing the same value
$c_i / a_i$ for several items will result in the order of these items
that is worst for the leader anyway, due to the pessimism.

To solve the robust bilevel continuous knapsack problem with interval
uncertainty, one could compute the partial order $P$ and enumerate all
linear extensions of $P$. Every linear extension corresponds to a
sorting of the items the follower will use when the adversary has
chosen $c$ appropriately. Every sorting corresponds to a piecewise
linear function, and the leader's objective function is the pointwise
minimum of all these, as in Section~\ref{section_scenariouncertainty}.
This approach does not have polynomial runtime in general, as there
could be exponentially many linear extensions. However, it turns out
that it is not necessary to consider all linear extensions explicitly
and that the problem can still be solved in polynomial time.  We will
see that the adversary's problem for fixed~$b\in [b^-,b^+]$ is closely
related to the \emph{precedence constraint knapsack problem} or
\emph{partially ordered knapsack problem}. This is a $0$-$1$ knapsack
problem, where additionally, a partial order on the items is given and
it is only allowed to pack an item into the knapsack if all its
predecessors are also selected; see, e.g., Section~13.2
in~\cite{Kellereretal}.

For the special case where the partial order is an
interval order, Woeginger described a pseudopolynomial algorithm, see
Lemma~11 in~\cite{Woeginger}. There the problem is formulated in a
scheduling context and is called \emph{good initial set}. The
algorithm is based on the idea that every initial set (i.e., prefix of
a linear extension of the interval order) consists of
\begin{itemize}
\item a \emph{head}, which is the item whose interval has the
  rightmost left endpoint among the set,
\item all predecessors of the head in the interval order, and
\item some subset of the items whose intervals contain the left
  endpoint of the head in their interior,
\end{itemize}
assuming that all interval endpoints are pairwise distinct.

The algorithm iterates over all items as possible heads, and looks
for the optimal subset of the items whose intervals contain the
left endpoint of the head in their interior that results in an initial
set satisfying the capacity constraint. Since these items are
incomparable to each other in the interval order, each subproblem is
equivalent to an ordinary $0$-$1$ knapsack problem and can be solved
in pseudopolynomial time using dynamic programming; see,
e.g.,~\cite{Kellereretal}.  Our algorithm for the adversary's problem
is a variant of this algorithm for the continuous knapsack and uses
Dantzig's algorithm as a subroutine, therefore we will obtain
polynomial runtime.

For this, we introduce the notion of a \emph{fractional prefix} of a
partial order~$P$, which is a triple $(J, j, \lambda)$ such that $J
\subseteq \{1, \dots, n\}$, $j \in J$, $0 \leq \lambda < 1$
, and there
is an order of the items in $J$, ending with $j$, that is a prefix
of a linear extension of~$P$.  Every optimal solution of the follower,
given some~$b$ and~$c$, corresponds to a fractional prefix. The
follower's solution corresponding to a fractional prefix~$F=(J, j,
\lambda)$ is defined by
\[x^F_i:=\begin{cases}\begin{array}{ll}
  1 & \text{ for }i \in J \setminus \{j\}\\
  0 & \text{ for }i \in \{1, \dots, n\} \setminus J\\
  \lambda & \text{ for }i=j\;.
\end{array}\end{cases}\]
Additionally, there is the fractional prefix $\bar{F}$
  corresponding to the case in which all items are chosen, i.e.,
  $x^{\bar{F}}_i = 1$ for all $i \in \{1, \dots, n\}$.

Let $\mathcal{F}_P$ be the set of all fractional prefixes of the
interval order~$P$ given by~$U$ and~$a$, corresponding to the set of
all optimal follower's solutions for some~$b \in [0, \sum_{i = 1}^n
a_i]$ and some $c \in U$. Then the adversary's task is, for fixed $b$,
to choose a fractional prefix $F \in \mathcal{F}_P$ satisfying $a^\top
x^F = b$, which in the original formulation, he does implicitly by
choosing~$c \in U$ and anticipating the follower's optimal
solution under this objective. Therefore the leader's problem can be
reformulated as follows:
\[
\max_{b\in[b^-,b^+]} ~ \min_{\substack{F \in \mathcal{F}_P\\a^\top x^F
    = b}} ~~ d^\top x^F
\]
In the next subsections, we first describe an algorithm to solve the
inner minimization problem for fixed~$b$, i.e., the adversary's
problem, which will then be generalized to the maximization problem
over $b$.

\subsection{Solving the adversary's problem} \label{subsection_intervaladversary}

First, consider the special case where the interval order has no
relations. This means that all intervals intersect and hence, all
permutations are valid linear extensions. Note that a pairwise
intersection of intervals implies that all intervals have a common
intersection, since for every two intervals $[z_1^-, z_1^+]$ and
$[z_2^-, z_2^+]$ holds that $z_i^- \leq z_j^+$ for $i, j \in \{1,
2\}$, so the smallest right endpoint is right of or equal to the
largest left endpoint.

Then the adversary's problem is very similar to the ordinary
continuous knapsack problem. The only differences are that the
objective is now to minimize (and not maximize), that the objective
vector $d$ may contain positive and negative entries, and that the
constraint~$a^\top x \leq b$ is replaced by $a^\top x = b$. But with
this changed, the problem can still be solved using Dantzig's
algorithm as described in Section~\ref{section_withoutuncertainty};
note that the algorithm fills the knapsack completely anyway
assuming~$b \leq \sum_{i = 1}^n a_i$ and $c > 0$. Denote this
algorithm, returning the corresponding fractional prefix, by
\textsc{Dantzig}. We will also need this algorithm as a subroutine on
a subset of the item set (like the pseudopolynomial knapsack algorithm
in Woeginger's algorithm). Therefore, we consider \textsc{Dantzig} as
having input~$I \subseteq \{1, \dots, n\}$, $d \in \mathbb{R}^{n}$, $a
\in \mathbb{R}_{> 0}^{n}$, and $b\in [0,\sum_{i \in I} a_i]$, and only
choosing a solution among the items in $I$.

We have seen that in case the interval order has no relations the
adversary's problem can be solved by $\textsc{Dantzig}(\{1, \dots,
n\}, d, a, b)$.  The general adversary's problem can now be solved by
Algorithm~\ref{algorithm_adversary}.  In the notation of Woeginger's
algorithm, the $k$-th item is the head in iteration $k$, $I_k^-$ is
the set of its predecessors, and $I_k^0$ corresponds to the intervals
containing the left endpoint of the head -- not necessarily in their
interior here, so that, in particular, also~$k \in I_k^0$.

\begin{algorithm}[htb]
  \caption{Algorithm for the adversary's problem}
  \label{algorithm_adversary}

  \Input{~$a \in \mathbb{R}_{> 0}^n$, $0 \leq b \leq \sum_{i = 1}^n
    a_i$, $d \in \mathbb{R}^n$, $p^-, p^+ \in \mathbb{R}_{< 0}^n$ with
    $p^- \leq p^+$,\\ inducing an interval order $P$}

  \Output{~$F \in \mathcal{F}_P$ with $a^\top x^F = b$ minimizing $d^\top
    x^F$}
  
  \If{$b = \sum_{i = 1}^n a_i$}{
    \Return $\bar{F}$} \label{line_trivialcase}

  $K:=\emptyset$
  
  \For{$k = 1, \dots, n$}{
    $I_k^- := \{i \in \{1, \dots, n\} \colon p_i^+ < p^-_k\}$

    $I_k^0 := \{i \in \{1, \dots, n\} \colon p_i^- \leq p_k^- \leq p_i^+\}$

    \If{$0 \leq b - \sum_{i \in I_k^-} a_i < \sum_{i \in I_k^0}
      a_i$}{ \label{line_rangecondition}

      $(J_k', j_k, \lambda_k) := \textsc{Dantzig}(I_k^0, d, a, b -
      \sum_{i \in I_k^-} a_i)$ \label{line_DantzigIk0}

      $J_k := J_k' \cup I_k^-$

      $K:=K\cup \{k\}$
    }
  }

  \Return $(J_k, j_k, \lambda_k)$ with $k = \text{argmin}\{d^\top
  x^{(J_k, j_k, \lambda_k)} \colon k \in K\}$
\end{algorithm}

The basic difference to Woeginger's algorithm is that here it is
important to have a dedicated last item of the prefix, which will be
the one possibly taken fractionally by the follower. Apart from that,
the order of the items in the prefix is not relevant. In our
construction, any element of $I_k^0$ could be this last item, in
particular it could be $k$, but it does not have to. Note that, in
Algorithm~\ref{algorithm_adversary}, the prefix constructed in
iteration~$k$ does not necessarily contain the $k$-th item, but still,
all prefixes that do contain it as their head are covered by this
iteration.

\begin{lemma}
Algorithm~\ref{algorithm_adversary} is correct.
\end{lemma}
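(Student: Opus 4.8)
The plan is to prove correctness of Algorithm~\ref{algorithm_adversary} by showing that the set of fractional prefixes it considers as candidates is exactly rich enough: it always outputs a feasible fractional prefix $F\in\mathcal{F}_P$ with $a^\top x^F=b$ when one exists, and every such fractional prefix is dominated (in the objective $d^\top x^F$) by one of the candidates $(J_k,j_k,\lambda_k)$, $k\in K$. The key structural fact, borrowed from Woeginger's decomposition, is that any initial set of a linear extension of an interval order decomposes into a head $k$ (the item whose interval has the rightmost left endpoint $p_k^-$ in the set), all predecessors of $k$, and a subset of the items whose intervals straddle $p_k^-$; in our continuous variant the fractional item $j$ can be any element of $I_k^0$, which is why $\textsc{Dantzig}$ is invoked on $I_k^0$ rather than treating $k$ specially. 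I would state this decomposition carefully first, being explicit about the tie-breaking issue the text flagged: since in the pessimistic setting items sharing a profit value can be ordered arbitrarily (worst for the leader), we may take $I_k^0=\{i: p_i^-\le p_k^-\le p_i^+\}$ with non-strict inequalities and need not assume distinct endpoints.

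First I would handle the trivial case $b=0$, where $\emptyset$ is the unique feasible fractional prefix and is returned in line~\ref{line_trivialcase}. For $b>0$, I would argue feasibility of each candidate: if iteration $k$ passes the test in line~\ref{line_rangecondition}, then $\textsc{Dantzig}(I_k^0,d,a,b-\sum_{i\in I_k^-}a_i)$ is called with a capacity in $[0,\sum_{i\in I_k^0}a_i]$, so it returns a fractional prefix among $I_k^0$ that fills its capacity exactly; adjoining $I_k^-$ gives $a^\top x^{(J_k,j_k,\lambda_k)}=\sum_{i\in I_k^-}a_i+(b-\sum_{i\in I_k^-}a_i)=b$. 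I would then check that $(J_k,j_k,\lambda_k)$ is a genuine fractional prefix of $P$: the items of $I_k^-$ are exactly the predecessors of $k$, hence can be placed first; the remaining items lie in $I_k^0$ and are pairwise incomparable and also incomparable to nothing that forces them before $I_k^-$, so they may follow in any order, ending with $j_k$ — one must check $j_k$ has no successor inside $J_k$, which holds because everything in $J_k\setminus I_k^-\subseteq I_k^0$ is incomparable and $I_k^-$ consists of predecessors only. Conversely, given any $F=(J,j,\lambda)\in\mathcal{F}_P$ with $a^\top x^F=b$, let $h$ be its head; I would show $h\in K$ and that the candidate produced in iteration $h$ has objective value no larger than $d^\top x^F$: indeed $I_h^-\subseteq J$ (all predecessors of the head must be present) and $J\setminus I_h^-\subseteq I_h^0$, so $F$ restricted to $I_h^0$ is a feasible solution of the exact-fill continuous knapsack on $I_h^0$ with capacity $b-\sum_{i\in I_h^-}a_i$, which in particular shows $0<b-\sum_{i\in I_h^-}a_i\le\sum_{i\in I_h^0}a_i$ so line~\ref{line_rangecondition} fires; and $\textsc{Dantzig}$ solves that subproblem optimally, so $d^\top x^{(J_h,j_h,\lambda_h)}\le d^\top x^F$ after adding back the fixed contribution $d(I_h^-)$ common to both.

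Combining the two directions: the final line returns the best candidate over $k\in K$, which by the converse direction is at least as good as every feasible fractional prefix, and by the forward direction is itself feasible; hence it is an optimal solution to the adversary's problem, proving the lemma. The main obstacle I anticipate is not any single step but getting the head-decomposition lemma exactly right in the presence of equal interval endpoints and of a distinguished fractional item: one must be careful that "head" is well-defined (break ties arbitrarily, consistent with pessimism), that $I_h^0$ as defined with non-strict inequalities still consists of pairwise-incomparable items, and that $\textsc{Dantzig}$'s output — which fills its capacity exactly when $c>0$ and $b\le\sum a_i$, as noted in Section~\ref{subsection_intervaladversary} — indeed yields a valid fractional prefix of the sub-order on $I_h^0$ (trivial, since that sub-order is an antichain, so any ordering ending with the fractional item is a prefix of a linear extension). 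A secondary point worth spelling out is why it suffices to consider fractional prefixes with $a^\top x^F=b$ exactly: this is because, as observed earlier, for $b\le\sum a_i$ and $c>0$ every follower optimum fills the knapsack, so the adversary's feasible region is precisely $\{F\in\mathcal{F}_P: a^\top x^F=b\}$.
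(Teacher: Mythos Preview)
Your proposal is correct and follows essentially the same approach as the paper's proof: after the trivial case $b=0$, you establish feasibility of each candidate $(J_k,j_k,\lambda_k)$ by checking that $I_k^0$ is an antichain and that $j_k$ has no successor in $J_k$, and then prove optimality by taking the head $h$ of an arbitrary feasible fractional prefix, showing $I_h^-\subseteq J$ and $J\setminus I_h^-\subseteq I_h^0$, and concluding via the optimality of \textsc{Dantzig} on the subproblem. Your additional remarks on tie-breaking, the antichain structure of $I_h^0$ under non-strict inequalities, and the exact-fill property are consistent with the paper's setup and merely make explicit what the paper leaves implicit.
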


\begin{proof}
  For $b = \sum_{i = 1}^n a_i$, the only feasible and therefore
  optimal solution is $\bar{F}$, so that the result is correct if
  the algorithm terminates in line~\ref{line_trivialcase}.

  So assume~$b < \sum_{i = 1}^n a_i$ now. The first part of the proof shows that the
  algorithm returns a feasible solution if~$K\neq\emptyset$. The
  second part proves the optimality of the returned solution and also
  that $K\neq\emptyset$ always holds.

  In each iteration $k$, $I_k^-$ is the set of predecessors of $k$ in
  the interval order $P$. The set $I_k^0$ consists of items that are
  incomparable to $k$ and to each other in $P$, since the
  corresponding intervals all contain the point $p^-_k$ by
  definition. Hence it is valid (with respect to $P$) to call
  Dantzig's algorithm in line~\ref{line_DantzigIk0} on $I_k^0$. The
  condition in line~\ref{line_rangecondition} makes sure that we only
  call the subroutine if the available capacity is in the correct
  range, i.e., if it is possible to fill the knapsack with the items
  in~$I_k^-$ and a subset of the items in~$I_k^0$ of which one item is
  taken fractionally.

  Then $(J_k, j_k, \lambda_k)$ is a fractional prefix of $P$, as all
  predecessors of $k$ and therefore also all predecessors of all $i
  \in J_k \subseteq I_k^- \cup I_k^0$ belong to $J_k$, since $p_i^-
  \leq p_k^-$ holds for them. The item $j_k$ is a valid last item of a
  prefix consisting of the items in $J_k$ because $j_k \in I_k^0$ by
  construction and therefore, there are no successors of $j_k$ in
  $J_k$. Moreover,
  $$\textstyle a^\top x^{(J_k, j_k, \lambda_k)} = \sum_{i \in
    I_k^-} a_i + (b - \sum_{i \in I_k^-} a_i) = b$$
  by construction and
  by the correctness of \textsc{Dantzig}. Therefore, for all $k \in
  K$, $(J_k, j_k, \lambda_k)$ is a feasible solution.

  Now we prove the optimality of the returned solution. Let $(J, j,
  \lambda)$ be an optimal solution (if $\bar{F}$ is optimal, then $b$
  must be $\sum_{i = 1}^n a_i$, and this case is trivial). Choose $k
  \in J$ with maximal $p^-_k$ (i.e., a head of the prefix). Then
  $I_k^- \subset J$ since $J$ is a prefix and $k \in J$, so all
  predecessors of~$k$ must be in $J$, as well. Moreover, $j \in J
  \setminus I_k^-$ as all items in $I_k^-$ have at least one successor
  (namely $k$) in $J$. By the choice of $k$, we have $J \setminus
  I_k^- \subseteq I_k^0$ and $(J \setminus I_k^-, j, \lambda)$ is a
  feasible solution of the subproblem solved by the call of
  \textsc{Dantzig} in line~\ref{line_DantzigIk0} since
  $$\textstyle a^\top x^{(J
    \setminus I_k^-, j, \lambda)} = a^\top x^{(J, j, \lambda)} -
  \sum_{i \in I_k^-} a_i = b - \sum_{i \in I_k^-} a_i\;.$$
  Thus
  \begin{align*}
    d^\top x^{(J, j, \lambda)}
    & = \textstyle\sum_{i \in I_k^-} d_i + d^\top x^{(J\setminus I_k^-, j, \lambda)}\\
    & \geq \textstyle\sum_{i \in I_k^-} d_i + d^\top x^{(J_k', j_k, \lambda_k)}
     ~=~  d^\top x^{(J_k, j_k, \lambda_k)}\;,
  \end{align*}
  which is at least the cost of any returned solution.  The second part
  of the proof also shows that~$K\neq\emptyset$. Thus, the algorithm
  always returns an optimal solution.
\end{proof}

An optimal solution of the adversary's problem in the original
formulation, i.e., a vector~$c \in U$, can be derived from the
fractional prefix $(J_k, j_k, \lambda_k)$ returned by the algorithm in
the following way:
\[c_i:=\begin{cases}\begin{array}{ll}
  c_i^+ & \text{ for }i \in J_k \setminus \{j_k\}\\
  c_i^- & \text{ for }i \in \{1, \dots, n\} \setminus J_k\\
  c_k^+ / a_k \cdot a_{j_k} & \text{ for }i = j_k\;,
\end{array}\end{cases}\]
Note that $c_{j_k}^- \leq c_k^+ / a_k \cdot a_{j_k} \leq c_{j_k}^+$
holds because by construction $p_{j_k}^- \leq p_k^- \leq p_{j_k}^+$ as
$j_k \in I_k^0$.  Indeed, this definition ensures that the items $i
\in J_k \setminus \{j_k\}$ precede $j_k$ in the follower's sorting,
since $p_i^- \leq p_k^-$ and therefore, $c_i / a_i = c_i^+ / a_i = -
p_i^- \geq - p_k^- = c_k^+/a_k = c_{j_k} / a_{j_k}$. Analogously, the
items $i \in \{1, \dots, n\} \setminus J_k$ will be packed after $j_k$
by the follower.

Note that this solution sets each variable except for $c_{j_k}$ to an
endpoint of its corresponding interval. In general, there is no
optimal solution with all variables set to an interval endpoint. This
can be seen by the following example.

\begin{example}
  Let $n = 3$ and $a=(1,1,1)^\top$, $b =
\frac{3}{2}$, $U = \{3\} \times \{2\} \times [1, 4]$, $d = (-1,
1, 0)^\top$. The optimal solution returned by the algorithm is $(\{1,
3\}, 3, \frac{1}{2})$ with value $1$. For the follower to select the
first item and half of the third item, i.e., for~$c_1 \geq c_3
\geq c_2$ to hold, the adversary must choose $c_3 \in [2, 3]$, so it
cannot be at one of the endpoints of the interval $[1, 4]$.
\end{example}

\subsection{Solving the leader's problem}

Next, we describe an algorithm to solve the robust bilevel
optimization problem, which performs the maximization over the
capacity~$b$. For this, we will use the variant of Dantzig's algorithm
which returns a piecewise linear function, as described in
Section~\ref{section_withoutuncertainty}. We call this routine
\textsc{BilevelDantzig} and assume its input to be~$I \subseteq \{1,
\dots, n\}$, $d \in \mathbb{R}^{n}$, $a \in \mathbb{R}_{> 0}^{n}$, and~$0 \leq b^- \leq b^+ \leq \sum_{i \in I} a_i$, since, as before, we
need it also to work on a subset of the item set. The output is a
piecewise linear function~$f\colon [b^-, b^+] \to \mathbb{R}_{\geq
  0}$, which can be represented by a list of all its vertices, given
as points of the graph of~$f$.  The leader's problem can now be solved
by Algorithm~\ref{algorithm_leader}.

\begin{algorithm}
  \caption{Algorithm for the leader's problem}
  \label{algorithm_leader}

  \Input{~$a \in \mathbb{R}_{> 0}^n$, $0 \leq b^- \leq b^+ \leq \sum_{i
      = 1}^n a_i$, $d \in \mathbb{R}^n$, $p^-, p^+ \in \mathbb{R}_{<
      0}^n$ with $p^- \leq p^+$,\\ inducing an interval order $P$}

  \Output{~value $b\in [b^-,b^+]$ maximizing the result of
    Algorithm~\ref{algorithm_adversary}}

  $K := \emptyset$

  \For{$k = 1, \dots, n$}{
    $I_k^- := \{i \in \{1, \dots, n\} \colon p_i^+ < p^-_k\}$

    $I_k^0 := \{i \in \{1, \dots, n\} \colon p_i^- \leq p_k^- \leq
    p_i^+\}$

    \If{$0 \leq b^+ - \sum_{i \in I_k^-} a_i$ and $b^- - \sum_{i \in
        I_k^-} a_i \leq \sum_{i \in I_k^0}
      a_i$}{ \label{line_rangecondition2}

      $\tilde{b}^- := \max\{0, b^- - \sum_{i \in I_k^-} a_i\}$

      $\tilde{b}^+ := \min\{b^+ - \sum_{i \in I_k^-} a_i, \sum_{i \in
        I_k^0} a_i\}$
      
      $f_k' := \textsc{BilevelDantzig}(I_k^0, d, a, \tilde{b}^-,
      \tilde{b}^+)$ \label{line_DantzigIk02}

      $f_k := f_k' + (\sum_{i \in I_k^-} a_i, \sum_{i \in I_k^-} d_i)$
      \hfill // shift all vertices of~$f_k'$ by~$\sum_{i \in I_k^-} a_i$ in $x$

      \hfill // and by $\sum_{i \in I_k^-} d_i$ in $y$~direction

      $K := K \cup \{k\}$
    }
  }

  $f := \min\{f_k \colon k \in K\}$ \hfill // pointwise minimum \label{line_pointwisemin}

  \Return $\text{argmax}\{f(b) \colon b^- \leq b \leq b^+\}$ \label{line_max}
\end{algorithm}

\begin{lemma}
  Algorithm~\ref{algorithm_leader} is correct.
\end{lemma}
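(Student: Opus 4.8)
The plan is to reduce the correctness of Algorithm~\ref{algorithm_leader} to that of Algorithm~\ref{algorithm_adversary}, which is established above. Write $g(b)$ for the optimal value of the adversary's problem at capacity $b$, i.e.\ the value of the fractional prefix returned by Algorithm~\ref{algorithm_adversary} on input $b$; the leader must return a maximizer of $g$ over $[b^-,b^+]$. It therefore suffices to show two things: first, that the piecewise linear function $f$ assembled in line~\ref{line_pointwisemin} coincides with $g$ on all of $[b^-,b^+]$; second, that line~\ref{line_max} then returns a point of $[b^-,b^+]$ at which $f$, hence $g$, is maximal. The second point is routine once $f$ is known to be piecewise linear with each linear piece defined on a closed subinterval of $[b^-,b^+]$.

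For the first point, I would fix $k\in\{1,\dots,n\}$ and record the function $v_k\colon[0,\sum_{i\in I_k^0}a_i]\to\mathbb{R}$, $v_k(\beta):=d^\top x^{\textsc{Dantzig}(I_k^0,d,a,\beta)}$. As discussed in Section~\ref{section_withoutuncertainty}, $v_k$ is continuous and piecewise linear, and $\textsc{BilevelDantzig}(I_k^0,d,a,\tilde b^-,\tilde b^+)$ returns exactly the restriction of $v_k$ to $[\tilde b^-,\tilde b^+]$. Consequently the function $f_k$ built in Algorithm~\ref{algorithm_leader} is $f_k(b)=\sum_{i\in I_k^-}d_i+v_k\big(b-\sum_{i\in I_k^-}a_i\big)$, and its domain is precisely the set of $b\in[b^-,b^+]$ with $\sum_{i\in I_k^-}a_i\le b\le\sum_{i\in I_k^-\cup I_k^0}a_i$; indeed the guard in line~\ref{line_rangecondition2} holds exactly when this set is nonempty, and $\tilde b^-,\tilde b^+$ are the endpoints of its intersection with $[b^-,b^+]$, shifted down by $\sum_{i\in I_k^-}a_i$ and clamped to $[0,\sum_{i\in I_k^0}a_i]$. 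The only difference to the guard in line~\ref{line_rangecondition} of Algorithm~\ref{algorithm_adversary} is the strict bound $0<b-\sum_{i\in I_k^-}a_i$ there, so that $f_k$ is additionally defined at the left endpoint $b=\sum_{i\in I_k^-}a_i$.

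Then I would prove $f(b)=g(b)$ for each fixed $b\in[b^-,b^+]$ by two inequalities, where $f(b)=\min\{f_k(b)\colon k\in K,\ b\in\mathrm{dom}\,f_k\}$. For $f(b)\le g(b)$: every head $k$ retained by Algorithm~\ref{algorithm_adversary} at capacity $b$ belongs to $K$ and has $b$ in the domain of $f_k$, and by the correctness proof of Algorithm~\ref{algorithm_adversary} the value $d^\top x^{(J_k,j_k,\lambda_k)}$ of the fractional prefix constructed there equals $\sum_{i\in I_k^-}d_i+v_k(b-\sum_{i\in I_k^-}a_i)=f_k(b)$; hence $f(b)\le g(b)$, with the value $0=g(0)$ supplied by $f_0$ when $b=0$ (which can occur only if $b^-=0$). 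For $f(b)\ge g(b)$: for every $k\in K$ with $b$ in the domain of $f_k$, the construction from iteration~$k$ of Algorithm~\ref{algorithm_adversary} still yields a feasible $(J_k,j_k,\lambda_k)\in\mathcal{F}_P$ with $a^\top x^{(J_k,j_k,\lambda_k)}=b$ and $d^\top x^{(J_k,j_k,\lambda_k)}=f_k(b)$ --- at the extra left endpoint $b=\sum_{i\in I_k^-}a_i$ it degenerates to the fractional prefix $(I_k^-,h,1)$ with $h$ the head of $I_k^-$, or to $\emptyset$ when $I_k^-=\emptyset$ and $b=0$ --- so $f_k(b)\ge g(b)$ and therefore $f(b)\ge g(b)$. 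Finally, $f$ is defined at every $b\in[b^-,b^+]$: the correctness proof of Algorithm~\ref{algorithm_adversary} shows there is always an admissible head, which then also enters $K$ in Algorithm~\ref{algorithm_leader} with $b$ in its domain, while $b=0$ is handled by $f_0$. Since the optimistic variant differs only in the tie-breaking used inside \textsc{Dantzig}/\textsc{BilevelDantzig}, the same argument applies there.

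I expect the only real obstacle to be this last piece of bookkeeping: reconciling the strict guard of Algorithm~\ref{algorithm_adversary} with the non-strict guard together with the $\max$/$\min$ truncations of Algorithm~\ref{algorithm_leader}, and verifying that extending each $f_k$ to the left endpoint of its interval, together with the special $f_0$ branch for $b^-=0$, neither creates points where $f$ falls below $g$ nor drops the global maximum. Everything else is a direct parametrization in $b$ of the already verified Algorithm~\ref{algorithm_adversary}.
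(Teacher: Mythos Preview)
Your proposal is correct and follows essentially the same approach as the paper: both arguments establish that the function $f$ built in Algorithm~\ref{algorithm_leader} coincides, at every $b\in[b^-,b^+]$, with the optimal value returned by Algorithm~\ref{algorithm_adversary}, by matching the iterations $k$ of the two algorithms and their associated values, and both handle the one bookkeeping discrepancy (the non-strict left endpoint $b=\sum_{i\in I_k^-}a_i$ and the $b=0$ case via $f_0$) in the same way. Your presentation via the two inequalities $f(b)\le g(b)$ and $f(b)\ge g(b)$ is slightly more explicit than the paper's, but the substance is identical.
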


\begin{proof}
  First note that Algorithm~\ref{algorithm_adversary} can be
  considered as the special case of Algorithm~\ref{algorithm_leader}
  where~$b = b^- = b^+$. For the correctness of
  Algorithm~\ref{algorithm_leader}, it suffices to show that the
  function $f$ describes the value of the output of
  Algorithm~\ref{algorithm_adversary} depending on $b \in [b^-,
  b^+]$. For $b = \sum_{i = 1}^n a_i$, which is only possible if~$b^+
  = \sum_{i = 1}^n a_i$, consider any iteration $k = k^*$, where
  $k^* \in \{1, \dots, n\}$ is chosen such that $p_{k^*}^-$ is
  maximum. It is easy to check that this iteration results in a
  function $f_{k^*}$ with $f_{k^*}(b^+) = \sum_{i = 1}^n d_i$,
  corresponding to the fractional prefix $\bar{F}$ returned in
  Algorithm~\ref{algorithm_adversary}, and that for no $k$ with $p_k^-
  < p_{k^*}^-$, the function $f_k$ is defined for $b$. For the
  rest of the proof, we assume $b < \sum_{i = 1}^n a_i$.

  The condition in line~\ref{line_rangecondition2} ensures
  that~$\tilde{b}^-\le \tilde{b}^+$, so that the call of
  \textsc{BilevelDantzig} in line~\ref{line_DantzigIk02} is valid.
  Let $(J_k^b, j_k^b, \lambda_k^b)$ be the fractional prefix $(J_k,
  j_k, \lambda_k)$ in Algorithm~\ref{algorithm_adversary} called for
  $b \in [b^-, b^+]$ with $b < \sum_{i = 1}^n a_i$.  We claim that,
  for all~$k \in \{1, \dots, n\}$ and all~$b \in [b^-, b^+]$ with~$b <
  \sum_{i = 1}^n a_i$, the function $f_k$ in
  Algorithm~\ref{algorithm_leader} is defined in the point $b$ if and
  only if $(J_k^b, j_k^b, \lambda_k^b)$ is defined in
  Algorithm~\ref{algorithm_adversary}, and then $f_k(b) = d^\top
  x^{(J_k^b, j_k^b, \lambda_k^b)}$.

  Let $k \in \{1, \dots, n\}$ and $b \in [b^-, b^+]$ with $b < \sum_{i
    = 1}^n a_i$. Then $f_k(b)$ is defined if and only if~$\tilde{b}^-
  + \sum_{i \in I_k^-} a_i \leq b \leq \tilde{b}^+ + \sum_{i \in
    I_k^-} a_i$, i.e., if and only if
  \[\textstyle\sum_{i \in I_k^-} a_i \leq b
  \leq \sum_{i \in I_k^0} a_i + \sum_{i \in I_k^-} a_i\;,\] which is
  almost the same condition as the one for defining $(J_k^b, j_k^b,
  \lambda_k^b)$. Actually, $(J_k^b, j_k^b, \lambda_k^b)$ is not
  defined if $b = \sum_{i \in I_k^0} a_i + \sum_{i \in I_k^-} a_i$,
  but this is only for convenience in the formulation of
  Algorithm~\ref{algorithm_adversary}. We could define it as
  $(I_k^- \cup I_k^0 \cup \{j^*\}, j^*, 0)$, where $j^* \in \{1,
  \dots, n\} \setminus (I_k^- \cup I_k^0)$ is chosen such that
  $p_{j^*}^-$ is minimum. However, this is not relevant since this
  fractional prefix is also considered in iteration $j^*$. Note that
  such a $j^*$ always exists because of~$b < \sum_{i = 1}^n
  a_i$.

  In case~$f_k(b)$ is defined, the corresponding values $f_k(b)$ and
  $d^\top x^{(J_k^b, j_k^b, \lambda_k^b)}$ agree because the piecewise
  linear function returned by \textsc{BilevelDantzig} consists of the
  values of the solutions returned by \textsc{Dantzig} for given
  values of $b$. Since $f(b)$ is the minimum of all $f_k(b)$ that are
  defined, this implies that it is equal to the value of the optimal
  solution computed in Algorithm~\ref{algorithm_leader}.
\end{proof}

\begin{theorem}
  The robust bilevel continuous knapsack problem with interval
  uncertainty can be solved in $\mathcal{O}(n^2 \log n)$ time.
\end{theorem}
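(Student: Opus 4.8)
The correctness of Algorithm~\ref{algorithm_leader} has already been established in the preceding lemma, so the only thing left to prove is the claimed running time; the plan is to bound the cost of each ingredient of the algorithm separately and then add up.

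First I would analyze the main loop over $k=1,\dots,n$. In each iteration, the sets $I_k^-$ and $I_k^0$ can be determined in $\mathcal{O}(n)$ time by a single scan over the intervals $[p_i^-,p_i^+]$ (one may first sort the endpoints in $\mathcal{O}(n\log n)$ time, but this is not needed for the claimed bound). The call of \textsc{BilevelDantzig} on $I_k^0$ in line~\ref{line_DantzigIk02} costs $\mathcal{O}(|I_k^0|\log|I_k^0|)=\mathcal{O}(n\log n)$: sorting the items of $I_k^0$ by their ratios $d_i/a_i$ dominates, after which the piecewise linear function $f_k'$ is produced in time linear in $|I_k^0|$, and the subsequent shift yielding $f_k$ is again $\mathcal{O}(n)$. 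In particular each $f_k$ has $\mathcal{O}(n)$ vertices. Summing over the at most $n$ iterations, the loop runs in $\mathcal{O}(n^2\log n)$ time and produces at most $n+1$ piecewise linear functions $f_k$ whose numbers of linear segments add up to $m=\mathcal{O}(n^2)$.

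Next I would treat lines~\ref{line_pointwisemin} and~\ref{line_max}. This is exactly the situation already encountered in the proof of Theorem~\ref{theorem_finite}: the leader's objective function is the lower envelope $f=\min_{k\in K}f_k$ of $m=\mathcal{O}(n^2)$ linear segments, which by the algorithm of~\cite{Hershberger} can be computed in $\mathcal{O}(m\log m)=\mathcal{O}(n^2\log n)$ time and consists of $\mathcal{O}(m\alpha(m))$ segments, where $\alpha$ is the inverse Ackermann function. Maximizing the piecewise linear function $f$ over $[b^-,b^+]$ in line~\ref{line_max} then only requires inspecting its vertices together with the two boundary points, which is linear in the number of segments of $f$ and hence negligible. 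Adding up all contributions, Algorithm~\ref{algorithm_leader}, and thus the robust bilevel continuous knapsack problem with interval uncertainty, is solved in $\mathcal{O}(n^2\log n)$ time.

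The only slightly delicate point is the last one: one has to invoke the Davenport--Schinzel bound on lower envelopes of line segments to be certain that the pointwise minimum in line~\ref{line_pointwisemin} neither blows up in size nor in computation time beyond $\mathcal{O}(n^2\log n)$; everything else is routine bookkeeping of Dantzig-type subroutines. For the optimistic variant of the problem, the only modification is the tie-breaking rule used inside \textsc{BilevelDantzig} when several items share the same ratio $d_i/a_i$, which does not affect any of the above estimates.
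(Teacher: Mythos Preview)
Your argument is correct and follows essentially the same route as the paper's own proof: bound each of the $n$ iterations by $\mathcal{O}(n\log n)$ via the sorting inside \textsc{BilevelDantzig}, and then invoke the lower-envelope result from Section~\ref{section_scenariouncertainty} (Hershberger) for the pointwise minimum in line~\ref{line_pointwisemin}. One small inaccuracy in your closing remark: the modification needed for the optimistic variant is not merely a tie-breaking rule inside \textsc{BilevelDantzig}, but rather a preprocessing of one-point interval intersections affecting the construction of $I_k^-$ and $I_k^0$, as the paper's subsequent remark explains; this does not, however, impact the running-time argument.
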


\begin{proof}
  In every of the $n$ iterations, Algorithm~\ref{algorithm_leader}
  needs $\mathcal{O}(n)$ time to compute the sets~$I_k^-$ and~$I_k^0$,
  and, since $|I_k^0| \leq n$, $\mathcal{O}(n \log n)$ time for
  Dantzig's algorithm. As explained in
  Section~\ref{section_scenariouncertainty}, the pointwise minimum of
  the at most $n$ piecewise linear functions with at most $n$ segments
  each, as well as the maximum of the resulting function
  (lines~\ref{line_pointwisemin} and~\ref{line_max}) can be computed
  in $\mathcal{O}(n^2 \log n)$ time.  
\end{proof}

\begin{remark} \label{rem_interval}
  If there are intersections of intervals only consisting in one
  point, no special treatment is needed in the pessimistic problem
  version since the worst possible order of the corresponding items
  from the leader's perspective will be chosen anyway. If we use the
  optimistic approach, we have to be more careful about that. However,
  if no one-point intersections occur, there is no difference between
  the optimistic and the pessimistic case regarding the sortings the
  adversary can enforce. Hence, Algorithm~\ref{algorithm_adversary}
  and Algorithm~\ref{algorithm_leader} can be used for the optimistic
  case without any changes.

  In the optimistic setting with one-point intersections, the set
  of follower's sortings the adversary can enforce does not even have
  to be the set of linear extensions of a partial order. However,
  Algorithm~\ref{algorithm_adversary} and
  Algorithm~\ref{algorithm_leader} can be modified to work also in the
  general optimistic setting as follows. Given an item $k \in \{1,
  \dots, n\}$, we distinguish two cases:
  \begin{enumerate}
  \item If $p_k^- < p_k^+$, the follower will sort every item $i \in
    \{1, \dots, n\}$ with $p_i^+ = p_k^-$ before item~$k$, given an
    optimal adversary's choice. Indeed, the adversary either prefers
    the order where $i$ precedes $k$ anyway (i.e., $d_i / a_i < d_k /
    a_k$) and can enforce it, e.g., by choosing $p_k = p_k^- +
    \varepsilon$ for some $\varepsilon > 0$ that is small enough, or
    he prefers the order where $k$ precedes~$i$ (i.e., $d_i / a_i >
    d_k / a_k$), but cannot enforce it because even if choosing $p_i =
    p_k = p_k^-$, the follower will choose item $i$ before item $k$
    because of the optimism. If $d_i / a_i = d_k / a_k$ holds, the
    leader and the adversary are indifferent about the order of $i$
    and $k$, and we may assume that $i$ precedes $k$ then, as
    well. With this idea, one can show that setting
    \begin{eqnarray*}
     I_k^- & := & \{i \in \{1, \dots, n\} \colon p_i^+ \leq p_k^-\}\\
     I_k^0 & := & \{i \in \{1, \dots, n\} \colon p_i^- \leq p_k^- < p_i^+\}
    \end{eqnarray*}
    in Algorithm~\ref{algorithm_adversary}
    and Algorithm~\ref{algorithm_leader} for this $k$ leads to the
    desired result.
  \item If $p_k^- = p_k^+$, the follower will sort every item $i \in
    \{1, \dots, n\}$ with $p_i^- < p_i^+ = p_k^-$ before item $k$,
    given an optimal adversary's choice, by similar arguments as in
    the first case. Again similarly, every item $i \in \{1, \dots,
    n\}$ with $p_i^- = p_k^- < p_i^+$ will be chosen after item~$k$. Among the items $i \in \{1, \dots, n\}$ with $p_i^- = p_i^+ =
    p_k^-$, the adversary has no choice at all, and the follower will
    choose the best order from the leader's perspective, i.e., every
    such~$i$ with $d_i / a_i > d_k / a_k$ will precede $k$, while
    every $i$ with $d_i / a_i < d_k / a_k$ will be chosen after~$k$. We may assume that there are no such items $i$ with $d_i /
    a_i = d_k / a_k$ because they could be merged into a single item
    together with item $k$ in the beginning. With this knowledge, we
    can define
    \begin{eqnarray*}
      I_k^- & := & \{i \in \{1, \dots, n\} \colon p_i^- < p_i^+ \leq p_k^- \text{ or } (p_i^- = p_i^+ =
      p_k^- \text{ and } d_i / a_i > d_k / a_k)\}\\
      I_k^0 & := & \{k\} \cup \{i \in \{1, \dots, n\} \colon p_i^- < p_k^- < p_i^+\}
    \end{eqnarray*}
    in iteration $k$ of Algorithm~\ref{algorithm_adversary} and
    Algorithm~\ref{algorithm_leader}. It might happen that there is no
    adversary's choice leading to the follower choosing the desired
    fractional prefix $(J_k, j_k, \lambda_k)$ if $j_k \neq k$ and
    there is a $j^* \in \{1, \dots, n\}$ with $p_{j^*}^- = p_{j^*}^+ =
    p_k^-$ and $d_{j^*} / a_{j^*} < d_k / a_k$. But in this case,
    iteration $k$ is not relevant anyway, hence we can just omit $k$
    from $K$ if this happens. Indeed, given an optimal solution in
    which $p_k^- = p_k^+$ holds for every possible head $k$, one can
    show that the fractional item of the optimal solution can be
    assumed to be the head $k$ that minimizes $d_i/a_i$.
  \end{enumerate}
\end{remark}

\section{Discrete uncorrelated uncertainty} \label{section_discreteuncorrelateduncertainty}

We now consider the robust bilevel continuous knapsack problem with an
uncertainty set of the form $U = U_1 \times \dots \times U_n$, where
each $U_i$ is a finite set, i.e., for every component of $c$, there is
a finite number of options the adversary can choose from and these are
independent of each other. It turns out that this version of the
problem is NP-hard in general, even if we consider the special case $U
= \{c_1^-, c_1^+\} \times \dots \times \{c_n^-, c_n^+\}$, where there
are only two options in every component.

\begin{theorem} \label{thm_discreteuncorrelated}
  The robust bilevel continuous knapsack problem with an uncertainty
  set being the product of finite sets, is NP-hard, even if each of
  these sets has size two.
\end{theorem}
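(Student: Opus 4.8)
The plan is to reduce from a classical NP-complete problem whose structure matches the "independent binary choices" available to the adversary — the natural candidates being PARTITION or SUBSET SUM, since the adversary here independently picks each $c_i \in \{c_i^-, c_i^+\}$ and this choice only affects the \emph{relative order} in which the follower packs the items. I would design a gadget in which the follower's critical item, and hence the leader's objective value, encodes whether a chosen subset of items sums to a target value. Concretely, given a PARTITION instance with values $w_1, \dots, w_m$ summing to $2W$, I would create items whose sizes $a_i$ are (scaled copies of) the $w_i$, together with a few auxiliary items, and choose the two profit options $c_i^-, c_i^+$ so that for each $i$, the small value $c_i^-$ forces item $i$ to be packed \emph{after} some distinguished block and the large value $c_i^+$ forces it \emph{before} that block. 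Then the adversary's choice of a scenario $c \in U$ corresponds bijectively to choosing a subset $S \subseteq \{1,\dots,m\}$ to place "early", and the total size packed before the distinguished block is $\sum_{i\in S} a_i$.

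The next step is to rig the leader's objective vector $d$ and the capacity window $[b^-, b^+]$ so that the leader's value is minimized precisely when $\sum_{i\in S} a_i$ hits the partition target $W$. One clean way: make $d$ assign profit to an auxiliary "penalty" item that the follower packs fractionally exactly at capacity $b$, set $b^- = b^+ = b$ to a value chosen so that the fractional item is the penalty item iff the early block has total size exactly $W$, and arrange signs in $d$ so that this fractional packing yields a strictly worse objective than any off-target scenario. Since all hardness claims in the paper (cf.\ the abstract) already apply to \emph{evaluating} the leader's objective, I would indeed fix $b^- = b^+$, so the reduction targets the adversary's problem alone, which is cleaner. I would then verify that a PARTITION solution exists if and only if the adversary's optimal value is below a computable threshold, and check that all numbers produced are polynomially bounded and that $0 < c^- \le c^+$, $a > 0$, $0 \le b^- \le b^+ \le \sum_i a_i$ hold.

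The main obstacle I anticipate is the interaction with the \emph{pessimistic} tie-breaking and with intervals/points where several profit ratios $c_i/a_i$ coincide: when the adversary places item $i$ "early" versus "late," the boundary case where its ratio equals that of the distinguished block must be controlled, and the pessimistic follower will break such ties against the leader. I would handle this either by perturbing the $c_i^\pm$ slightly (polynomially many bits suffice) so that no two ratios ever coincide in any scenario, or by exploiting pessimism deliberately so that the tie itself realizes the worst case for the leader. A secondary subtlety is ensuring the auxiliary items don't give the adversary extra, unintended freedom: their profit options should be \emph{singletons} (or forced) so the only genuine binary choices are the $m$ partition-encoding ones — note the theorem allows each $U_i$ to have size two, so singletons are permitted as a degenerate case, or one can add a dummy second value that is provably never chosen. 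Finally, I would sketch the straightforward modification for the optimistic variant (adjusting the perturbation direction so optimistic tie-breaking still forces the intended order), as promised in the introduction.
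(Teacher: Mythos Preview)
Your proposal is essentially the paper's approach: a reduction from \textsc{Subset Sum} in which each binary choice $c_i\in\{c_i^-,c_i^+\}$ places weight-item $i$ before or after a distinguished auxiliary item, so that scenarios correspond bijectively to subsets and the leader's value detects whether the target sum can be hit. The paper's construction uses \emph{two} auxiliary items---one of tiny size $\varepsilon$ with very negative $d$-value and one of huge size $M$ with small positive $d$-value---whose interplay forces $x_i\in\{0,1\}$ on all weight-items in any optimal adversary response (this is the content of the paper's Lemma~5, and it is precisely the ``binarity'' mechanism you anticipate needing). The main cosmetic difference is that the paper works on a short interval $[W,\,W+2\varepsilon]$ and distinguishes yes/no instances by whether the optimal $b^*$ equals $b^+$; hardness of evaluation at a single fixed $b$ is then derived as an immediate corollary, which is the variant you target directly.
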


We show Theorem~\ref{thm_discreteuncorrelated} by a reduction from the
well-known NP-hard subset sum problem. Let $m \in \mathbb{N}$ and
$w_1, \dots, w_m, W \in \mathbb{N}$ be an instance of subset
sum. Without loss of generality, we may assume $1 \leq W \leq \sum_{i
  = 1}^m w_i - 1$. We show that we can decide if there is a subset~$S
\subseteq \{1, \dots, m\}$ with $\sum_{i \in S} w_i = W$ in polynomial
time if the following instance of the robust bilevel continuous
knapsack problem can be solved in polynomial time: define
$\varepsilon:=\frac{1}{4}$ and~$M:=\sum_{i = 1}^m w_i + \varepsilon$.
Let $n := m + 2$ and set
\begin{eqnarray*}
  a &:=& (\varepsilon, w_1, \dots, w_m, M)^\top\\
  d &:=& (-M, - w_1, \dots, - w_m,\varepsilon)^\top\\
  b^- &:=& W\\
  b^+ &:=& W + 2 \varepsilon\;.
\end{eqnarray*}
The uncertainty set is defined as
$$U := \{1 \cdot a_1, (n + 1) \cdot a_1\} \times \{2 \cdot a_2, (n + 2) \cdot a_2\}
\times \dots \times \{n \cdot a_n, 2n \cdot a_n\}\;,$$ which leads to
$c_1^-/a_1 < \dots < c_n^-/a_n < c_1^+/a_1 < \dots < c_n^+/a_n$. Note
that, since all these values are distinct, the optimistic and
pessimistic approach do not have to be distinguished for this
instance.

In the following two lemmas, we investigate the structure of optimal
follower's solutions and of the leader's objective function,
respectively. Both are done for a large range $[\varepsilon, M)$ of
values for $b$, which will be useful to understand the two different
behaviors the function can have in the actual range $[b^-, b^+]$
afterwards, depending on the subset sum instance being a yes or a no
instance.

\begin{lemma}\label{lemma_binary}
  Let any leader decision $b \in [\varepsilon, M)$ be given. Then, for
  every optimal choice of the adversary, the resulting follower's
  solution~$x$ satisfies $x_i\in\{0,1\}$ for all~$i \in
  \{2,\dots,n-1\}$.
\end{lemma}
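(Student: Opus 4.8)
The plan is to exploit the special structure of the uncertainty set, in which the item values $c_i/a_i$ come in two interleaved increasing blocks: the ``small'' values $c_1^-/a_1 < \dots < c_n^-/a_n$ and the ``large'' values $c_1^+/a_1 < \dots < c_n^+/a_n$, with every small value below every large value. Since the adversary picks each $c_i$ independently from $\{c_i^-, c_i^+\}$, a choice of the adversary corresponds to choosing, for each item, whether it lands in the ``small'' block or the ``large'' block, and within each block the relative order of items is fixed (by index). So after the adversary's choice, the follower's Dantzig order sorts items by $c_i/a_i$ decreasingly: first the large-block items in decreasing index order, then the small-block items in decreasing index order. The follower then fills the knapsack to capacity exactly $b$ in this order, taking at most one item fractionally — the \emph{critical} item.

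With this picture, the key observation is about item~$1$ and item~$n$, the two ``special'' items. Item~$1$ has the smallest size $a_1=\varepsilon$ and $d_1 = -M$ very negative, i.e.\ it is extremely bad per unit for the leader; item~$n$ has the largest size $a_n = M$ and $d_n = \varepsilon > 0$, i.e.\ it is the only profitable item for the leader but huge. First I would argue that an optimal adversary never lets item~$n$ be packed (fully or fractionally) when $b < M$: packing item~$n$ requires $b\ge a_n = M$ to take it fully, and taking it fractionally only happens if it is the critical item, which the adversary can avoid — and wants to avoid, because $d_n>0$ would help the leader whereas filling that same capacity with other items contributes $d_i<0$. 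More precisely, because $b<M=a_n$, there is room to fill $b$ entirely from the $m+1$ items $\{1,\dots,n-1\}$ (their sizes sum to $\sum w_i + \varepsilon = M > b$), and the adversary can push item~$n$ to the very end of the order (put it in whichever block makes $c_n/a_n$ maximal among large items — actually the adversary wants item~$n$ \emph{last}, so it picks $c_n = c_n^-$, landing $n$ at the bottom of the small block); then $x_n = 0$. Symmetrically, the adversary wants item~$1$, which is terrible for the leader, packed as fully as possible; since $a_1 = \varepsilon$ is tiny and $b\ge\varepsilon$, the adversary can ensure $x_1 = 1$ by making item~$1$ come first — choose $c_1 = c_1^+$, the largest large value. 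I expect the bookkeeping here (checking the adversary strictly prefers these choices, not merely weakly) to be the part requiring care, since one must compare the leader's objective across the different orders rather than just assert feasibility.

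Once $x_1 = 1$ and $x_n = 0$ are established for \emph{every} optimal adversary choice, the remaining capacity to be distributed among items $2,\dots,n-1$ is $b - \varepsilon \in [0, M-\varepsilon) = [0, \sum_{i=1}^m w_i)$, and the follower packs these $m$ items in some order (large-block ones first by decreasing index, then small-block ones by decreasing index) until this capacity is used up, with exactly one of them critical/fractional. So at most one of $x_2,\dots,x_{n-1}$ is strictly between $0$ and $1$. The last step is to rule out that this single fractional value is actually attained at an optimal adversary choice: the adversary can always perturb his choice to swap the critical item into full-or-empty position. Concretely, if the current order makes some item $\ell$ critical with $0<x_\ell<1$, the adversary can move $\ell$ earlier or later in the order (by toggling whether $c_\ell$ is $c_\ell^-$ or $c_\ell^+$, changing which block $\ell$ sits in); since all the sizes $w_i$ are integers and $b-\varepsilon$ has a fixed fractional part determined only by whether $b$ itself is integral, one checks that among the reachable orders there is always one in which the capacity $b-\varepsilon$ is met exactly at an item boundary, giving an all-integral follower solution on $\{2,\dots,n-1\}$ of objective value no worse (indeed no better) for the leader — hence optimal for the adversary too. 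The main obstacle I anticipate is precisely this last combinatorial argument: showing that toggling block membership gives the adversary enough flexibility to always reach an integral solution \emph{while not increasing} the leader's objective, so that the integral one is genuinely among the optima. This is where the precise numerics ($\varepsilon = \tfrac14$, $b^\pm = W, W+2\varepsilon$, integrality of the $w_i$) will be used.
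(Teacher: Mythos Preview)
Your plan contains a genuine error that the paper's proof avoids entirely. You claim that for an optimal adversary, $x_n = 0$, arguing that the adversary ``picks $c_n = c_n^-$, landing $n$ at the bottom of the small block''. But this is backwards: since $c_i^-/a_i = i$, item~$n$ with $c_n = c_n^-$ has the \emph{largest} ratio in the small block, so in the follower's decreasing order it is the \emph{first} small-block item, not the last. There is no adversary choice that places item~$n$ at the very end of the order. In fact, $x_n = 0$ need not hold at an optimum: take $m=1$, $w_1=2$, $b=1$; the unique optimal adversary choice is $(c_1,c_2,c_3)=(c_1^+,c_2^-,c_3^-)$, giving order $1,3,2$ and $x=(1,0,\tfrac13)$, so $x_n>0$. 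Your subsequent ``toggling'' step is therefore built on a false premise, and the numerics you invoke ($\varepsilon=\tfrac14$, integrality of the $w_i$, the specific range $[W,W+2\varepsilon]$) are irrelevant to the lemma, which is stated for all $b\in[\varepsilon,M)$.

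The paper's argument is much shorter and sidesteps all of this. Instead of the strong (and false) claims $x_1=1$, $x_n=0$, it establishes only the weak facts $x_1>0$ (since any solution with $x_1=0$ has value exceeding $-M$, while the adversary can always achieve at most $-M$ by taking item~$1$ first) and $x_n<1$ (trivially, since $a_n=M>b$). The block structure you correctly identified then finishes the proof immediately: for $i\in\{2,\dots,n-1\}$, if $c_i=c_i^+$ then item~$i$ precedes item~$1$ in the follower's order (because $c_i^+/a_i > c_1/a_1$ regardless of $c_1$), and $x_1>0$ forces $x_i=1$; if $c_i=c_i^-$ then item~$i$ comes after item~$n$ (because $c_i^-/a_i < c_n/a_n$ regardless of $c_n$), and $x_n<1$ forces $x_i=0$. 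No toggling or case analysis on the critical item is needed.
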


\begin{proof}
  First note that the adversary can always enforce a solution with a
  value of at most~$-M$, e.g., by setting $c_1 = c_1^+$ and $c_i =
  c_i^-$ for all $i \in \{2, \dots, n\}$, so that the first item is
  taken first and therefore completely, i.e., $x_1 = 1$, because
  $a_1=\varepsilon \leq b$. Since every solution with $x_1 = 0$ has a
  value greater than $-M$, this implies $x_1 > 0$ for every follower's
  solution resulting from an optimal choice of the
  adversary. Moreover, note that we always have $x_n < 1$ as~$a_n = M
  > b$.

  Now let~$i\in\{2,\dots,n-1\}$. If the adversary chooses $c_i=c_i^+$,
  it follows that $x_i=1$. Indeed,~$x_i<1$ would imply $x_1=0$ since
  $c_1/a_1 < c_i^+/a_i$ always holds. Analogously, if the adversary
  chooses $c_i=c_i^-$, we have~$x_i=0$, as~$x_i>0$ would imply
  $x_n=1$, since $c_i^-/a_i < c_n/a_n$. Therefore, an optimal
  adversary's solution always leads to a follower's solution $x$ such that~$x_i \in \{0, 1\}$ for all $i \in \{2, \dots, n - 1\}$. 
\end{proof}

For the remainder of the proof, denote the leader's objective function
by~$f$. This function is described for all $b \in [\varepsilon, M)$ by
the following

\begin{lemma}\label{lemma_shape_f}
  Let $V_1,V_2\in\mathbb{N}_0$ be two values with $V_1 < V_2$ that can
  arise as sums of subsets of~$\{w_1,\dots,w_m\}$, such that there is
  no other such value $V$ with $V_1 < V < V_2$. Then
  $$f(b)=\begin{cases}
    \begin{array}{ll}
      -M-V_1+\frac{\varepsilon}{M}(b-V_1-\varepsilon), & ~\text{if }b\in[V_1+\varepsilon,V_2)\\
      \min\left\{-M-V_1+\frac{\varepsilon}{M}(b-V_1-\varepsilon),-\frac
        M\varepsilon(b-V_2)-V_2\right\}, &
      ~\text{if }b\in[V_2,V_2+\varepsilon)\;.
    \end{array}
  \end{cases}$$
\end{lemma}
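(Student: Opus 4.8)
The plan is to use Lemma~\ref{lemma_binary} to reduce the adversary's problem (for a fixed $b\in[\varepsilon,M)$) to a choice of a subset of the ``middle'' items $\{2,\dots,n-1\}$, whose $a$-weights realize exactly the subset sums of $\{w_1,\dots,w_m\}$, together with a choice of which of item~$1$ or item~$n$ is taken fractionally; then to optimize over these finitely many possibilities and read off the piecewise formula. Concretely, I would first collect the following facts, valid for every $b\in[\varepsilon,M)$ and every optimal adversary's choice, all of which are established in Lemma~\ref{lemma_binary} and its proof: the induced follower's solution $x$ satisfies $x_i\in\{0,1\}$ for $i\in\{2,\dots,n-1\}$, $x_1>0$, and $x_n<1$. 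Since a follower's solution is a fractional prefix, hence has at most one non-integral coordinate (its designated last item), these three facts leave only two possible shapes for an optimal solution: (A) item~$1$ and some $S\subseteq\{2,\dots,n-1\}$ are taken completely while item~$n$ is taken fractionally; or (B) some $S\subseteq\{2,\dots,n-1\}$ is taken completely while item~$1$ is taken fractionally and item~$n$ not at all. The degenerate case in which everything taken is integral ($b=\varepsilon+\sigma$ for a subset sum $\sigma$, item~$1$ and $S$ taken fully, item~$n$ not taken) is a common limit of (A) and (B) and, as noted below, never yields a value below the minimum of the two; it needs no separate treatment.

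Next I would write down $d^\top x$ in each shape, using $a_1=\varepsilon$, $d_1=-M$, $a_n=M$, $d_n=\varepsilon$, and $d_i=-a_i$ for $i\in\{2,\dots,n-1\}$; put $\sigma:=\sum_{i\in S}a_i$, which ranges over the subset sums of $\{w_1,\dots,w_m\}$. For shape~(A), $a^\top x=b$ forces $x_n=\tfrac1M(b-\varepsilon-\sigma)$, which lies in $(0,1)$ exactly when $\sigma<b-\varepsilon$ (the bound $x_n<1$ being automatic since $b<M$), and then $d^\top x=-M-\sigma+\tfrac{\varepsilon}{M}(b-\varepsilon-\sigma)$, a strictly decreasing function of $\sigma$; hence the adversary takes $\sigma$ as large as possible, namely $\sigma=V_1$, the largest subset sum not exceeding $b-\varepsilon$, and shape~(A) is always available. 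For shape~(B), $a^\top x=b$ forces $x_1=\tfrac1\varepsilon(b-\sigma)$, which lies in $(0,1)$ exactly when $b-\varepsilon<\sigma<b$, and then $d^\top x=-\sigma-\tfrac{M}{\varepsilon}(b-\sigma)$, a strictly increasing function of $\sigma$; hence the adversary takes $\sigma$ as small as possible, i.e.\ the smallest subset sum in $(b-\varepsilon,b)$, if such a sum exists. I would also check that both shapes are genuinely realized by an admissible $c\in U$: setting $c_i=c_i^+$ for $i\in S\cup\{1\}$ and $c_i=c_i^-$ otherwise makes the follower pack the items of $S$ first (in some order), then item~$1$, then item~$n$, which produces shape~(A) when $\sigma\le b-\varepsilon$ and shape~(B) when $b-\varepsilon<\sigma<b$; the degenerate configuration arises when $\sigma=b-\varepsilon$ and coincides in value with shape~(A) at $b=V_1+\varepsilon$.

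Finally I would assemble $f(b)$ as the minimum of the available shape values, after localizing $b$ between two consecutive subset sums $V_1<V_2$. Here the choice $\varepsilon=\tfrac14<1\le V_2-V_1$ is exactly what is needed. For $b\in[V_1+\varepsilon,V_2)$, the interval $(b-\varepsilon,b)$ is contained in $(V_1,V_2)$ and hence contains no subset sum, so shape~(B) is unavailable and $f(b)=-M-V_1+\tfrac{\varepsilon}{M}(b-V_1-\varepsilon)$. For $b\in[V_2,V_2+\varepsilon)$, shape~(A) still uses $\sigma=V_1$ (the largest subset sum $\le b-\varepsilon<V_2$ being $V_1$), while shape~(B) becomes available with $\sigma=V_2$ precisely for $b>V_2$, giving $f(b)=\min\{\,-M-V_1+\tfrac{\varepsilon}{M}(b-V_1-\varepsilon),\ -\tfrac{M}{\varepsilon}(b-V_2)-V_2\,\}$. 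At the single point $b=V_2$ the second term equals $-V_2$, and a short estimate using $V_1\ge0$ and $V_2\le M-\varepsilon$ shows the first term is strictly less than $-V_2$, so the stated minimum is still correct there. Since the intervals $[V_1+\varepsilon,V_2+\varepsilon)$, taken over all consecutive pairs of subset sums of $\{w_1,\dots,w_m\}$, partition $[\varepsilon,M)$, this determines $f$ on all of $[\varepsilon,M)$.

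I expect the main care to be needed at two points. The first is the exhaustiveness of the two-shape dichotomy: it relies entirely on combining Lemma~\ref{lemma_binary} (integrality of the middle coordinates in every optimal solution) with $x_1>0$ and $x_n<1$, and one must argue carefully that no other configuration of a fractional prefix can be optimal, in particular that the fractional item can only be item~$1$ or item~$n$. The second is the bookkeeping around which subset sums are feasible for a given $b$ and the degenerate endpoint $b=V_2$, where shape~(B) formally collapses but the stated minimum still returns the shape-(A) value; the separation $\varepsilon<1$ of consecutive subset sums is what makes this clean. The realizability of shapes (A) and (B) by explicit vectors $c\in U$, and the tiling of $[\varepsilon,M)$ by the relevant intervals, are then routine.
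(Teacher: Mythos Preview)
Your proposal is correct and follows essentially the same approach as the paper's proof: both use Lemma~\ref{lemma_binary} to force integrality of the middle items (together with $x_1>0$ and $x_n<1$), then reduce the adversary's choice to picking a subset sum~$\sigma$ and deciding whether item~$1$ or item~$n$ is the fractional one, and finally read off the two candidate values and take the minimum. Your write-up is more explicit about the two-shape dichotomy, the optimization over~$\sigma$ within each shape, and the boundary cases at $b=V_1+\varepsilon$ and $b=V_2$, whereas the paper argues these points more tersely ``by inspection''; but the underlying argument is the same.
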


\begin{proof}
  Let~$\sum_{i \in T} w_i = V_1$ for some~$T\subseteq \{1, \dots,
  m\}$. Using Lemma~\ref{lemma_binary}, it is easy to see that for
  all~$b\in [V_1+\varepsilon,V_2)$, the unique best choice of the
  adversary is to let the follower set~$x_1=1$ and pack~$2,\dots,n-1$
  according to~$T$, the rest is filled by a fraction of
  item~$n$. Indeed, for~$b<V_2$, no better packing of
  items~$2,\dots,n-1$ than~$T$ is possible, and since~$b\ge
  V_1+\varepsilon$, the most profitable item~1 (from the adversary's
  perspective) can be added entirely without making the packing~$T$
  infeasible. The adversary can produce this solution by
  choosing~$c_1=c_1^+$, $c_{n}=c_n^-$ and for $i \in \{2, \dots, n -
  1\}$, $c_i = c_i^+$ if $i - 1 \in T$, or $c_i = c_i^-$ otherwise.
  This leads to
  $$\textstyle f(b)=-M-V_1+\frac{\varepsilon}{M}(b-V_1-\varepsilon)$$
  for $b \in [V_1+\varepsilon,V_2)$.  If~$b\in [V_2,V_2+\varepsilon)$,
  the same solution, with a larger fraction of item~$n$, is still
  possible. However, since now a better packing $S \subseteq \{1,
  \dots, m\}$ with $\sum_{i \in S} w_i = V_2$ is available, we also
  have to consider to pack according to~$S$ and add as much of
  item~$1$ as possible (which the adversary can obtain again by
  setting~$c_1=c_1^+$, $c_{n}=c_n^-$ and, for all~$i \in \{2, \dots, n -
  1\}$, $c_i = c_i^+$ if $i - 1 \in S$, or $c_i = c_i^-$
  otherwise). This solution has value
  $$-\frac M\varepsilon(b-V_2)-V_2\;,$$
  and the adversary chooses the solution leading to a smaller value
  of~$f$.  
\end{proof}

Let~$b^* \in [b^-, b^+] = [W, W + 2 \varepsilon]$ be an optimal
leader's solution in the original range. We conclude the proof of
Theorem~\ref{thm_discreteuncorrelated} by showing that the given
instance of subset sum is a yes instance if and only if~$b^*\neq b^+$,
which follows from the following two lemmas. The two situations
arising in Lemma~\ref{lemma_no} and Lemma~\ref{lemma_yes} are
illustrated in Figure~\ref{figure_np}.

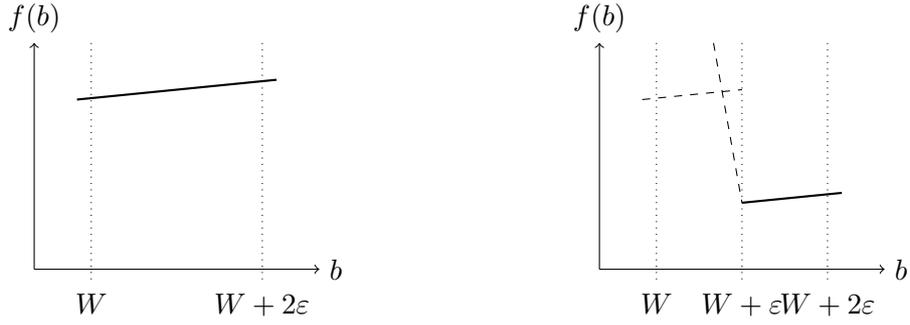
\begin{figure}
\begin{subfigure}[c]{0.5\textwidth}
  \begin{center}
    \begin{tikzpicture}[scale=0.75]
      \draw[->] (0,0) -- (5,0) node[anchor=west] {$b$};
      \draw[->] (0,0) -- (0,4) node[anchor=south] {$f(b)$};
      \draw[dotted] (1,4) -- (1,-0.25) node[anchor=north] {$W$};
      \draw[dotted] (4,4) -- (4,-0.25) node[anchor=north] {$W+2\varepsilon$};
      \draw[thick] (0.75,3) -- (4.25,3.35);
    \end{tikzpicture}
  \end{center}
  \subcaption{The case of a no instance (Lemma~\ref{lemma_no})}
\end{subfigure}
\begin{subfigure}[c]{0.5\textwidth}
  \begin{center}
    \begin{tikzpicture}[scale=0.75]
      \draw[->] (0,0) -- (5,0) node[anchor=west] {$b$};
      \draw[->] (0,0) -- (0,4) node[anchor=south] {$f(b)$};
      \draw[dotted] (1,4) -- (1,-0.25) node[anchor=north] {$W$};
      \draw[dotted] (2.5,4) -- (2.5,-0.25) node[anchor=north] {$W+\varepsilon$};
      \draw[dotted] (4,4) -- (4,-0.25) node[anchor=north] {$W+2\varepsilon$};
      \draw[thick] (2.5,1.35-0.175) -- (4.25,1.35);
      \draw[dashed] (2,4) -- (2.5,1.35-0.175);
      \draw[dashed] (0.75,3) -- (2.5,3.175);
    \end{tikzpicture}
  \end{center}
  \subcaption{The case of a yes instance (Lemma~\ref{lemma_yes})}
\end{subfigure}
\caption{The leader's objective function~$f$}\label{figure_np}
\end{figure}

\begin{lemma}\label{lemma_no}
  If the given instance~$(w_1,\dots,w_m,W)$ of subset sum is a no
  instance, then~$b^* = b^+$.
\end{lemma}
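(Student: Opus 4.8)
The plan is to analyze the leader's objective function $f$ on the interval $[W, W+2\varepsilon]$ using the explicit description from Lemma~\ref{lemma_shape_f}, under the assumption that $W$ is \emph{not} attainable as a subset sum of $\{w_1,\dots,w_m\}$. First I would identify the two consecutive subset-sum values $V_1, V_2$ that bracket the relevant range: since $W$ is not a subset sum but $1 \le W \le \sum w_i - 1$, there exist consecutive attainable values $V_1 < W \le V_2$, and in fact $V_1 < W < V_2$ because $W \neq V_2$. The key point to check is that $W + 2\varepsilon = W + \tfrac12 \le V_2$: this holds because $V_1$ and $V_2$ are integers with $V_1 < W < V_2$, so $W < V_2$ forces $V_2 \ge \lceil W \rceil \ge W$... more carefully, since $W \in \mathbb{N}$ and $W < V_2$ with $V_2 \in \mathbb{N}_0$, we get $V_2 \ge W + 1 > W + 2\varepsilon$. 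Hence the \emph{entire} range $[W, W+2\varepsilon]$ lies strictly inside $[V_1+\varepsilon, V_2)$ — here I should also confirm $W \ge V_1 + \varepsilon$, which follows since $W \ge V_1 + 1 > V_1 + \varepsilon$ as $W > V_1$ are integers.

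On this range, Lemma~\ref{lemma_shape_f} gives $f(b) = -M - V_1 + \tfrac{\varepsilon}{M}(b - V_1 - \varepsilon)$, which is \emph{affine and strictly increasing} in $b$ (slope $\varepsilon/M > 0$). Therefore $f$ is maximized on $[W, W+2\varepsilon]$ at the right endpoint, i.e., $b^* = b^+ = W + 2\varepsilon$. This matches the picture in Figure~\ref{figure_np}(a): a single increasing line segment over the whole window.

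The only genuinely delicate step is the boundary bookkeeping — ensuring that $[W, W+2\varepsilon]$ does not poke out of $[V_1+\varepsilon, V_2)$ on either side, which is where the choice $\varepsilon = \tfrac14$ and the integrality of the $w_i$ and $W$ are used; everything else is immediate from Lemma~\ref{lemma_shape_f}. I would also note in passing that $V_1$ and $V_2$ are well-defined: $0$ is always an attainable subset sum and $\sum_{i=1}^m w_i$ is attainable, and $W$ lies strictly between these by the assumption $1 \le W \le \sum w_i - 1$, so there is indeed a pair of consecutive attainable values straddling $W$. For the optimistic variant no change is needed, since the instance was constructed so that all ratios $c_i^{\pm}/a_i$ are distinct and hence the follower's solution is unique throughout.
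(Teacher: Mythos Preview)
Your proposal is correct and follows essentially the same approach as the paper: use integrality to conclude that no subset sum lies in $(W-1,W+1)$, so the entire window $[b^-,b^+]=[W,W+2\varepsilon]$ sits inside the linear piece $[V_1+\varepsilon,V_2)$ of Lemma~\ref{lemma_shape_f}, where $f$ has positive slope $\varepsilon/M$ and is therefore maximized at $b^+$. The paper's proof is just a two-sentence version of what you wrote; your additional boundary bookkeeping and the remark on well-definedness of $V_1,V_2$ are fine elaborations but not a different method.
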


\begin{proof}
  Let $V_1$ and $V_2$ be the largest subset sum with $V_1 < W$ and the
  smallest subset sum with $V_2 > W$, respectively. Then we have
  $[b^-, b^+] = [W, W + 2 \epsilon] \subseteq [V_1 +
  \varepsilon,V_2)$. Hence, by Lemma~\ref{lemma_shape_f}, the
  function~$f$ is linear on~$[b^-,b^+]$ with
  slope~$\frac{\varepsilon}{M}>0$. 
\end{proof}

\begin{lemma}\label{lemma_yes}
  If the given instance~$(w_1,\dots,w_m,W)$ of subset sum is a yes
  instance, then $f(b^-)>f(b^+)$ and hence~$b^*\neq b^+$.
\end{lemma}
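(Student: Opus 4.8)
The plan is to evaluate $f$ at the two endpoints $b^- = W$ and $b^+ = W + 2\varepsilon$ using the description from Lemma~\ref{lemma_shape_f}, and show the value strictly drops. The key observation is that when the subset sum instance is a yes instance, $W$ itself arises as a subset sum of $\{w_1,\dots,w_m\}$; by integrality of the $w_i$, the next larger achievable subset sum $V_2$ satisfies $V_2 \ge W + 1$. So on the whole interval $[W, W + 2\varepsilon] = [W, W + \tfrac12]$ we have $V_1 = W$ as the relevant lower subset sum (since $W \in [V_1 + \varepsilon, V_2)$ fails only for $b < W + \varepsilon$... wait — actually we must be careful: $b^- = W$ lies at the \emph{left endpoint} $V_1 = W$ of the second case in Lemma~\ref{lemma_shape_f}, playing the role of ``$V_2$'' there with predecessor being the previous subset sum). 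Concretely, writing $V_0$ for the largest subset sum strictly below $W$, the point $b = W$ falls in the regime $[V_0', \dots)$ where the two linear pieces meet, so $f(W) = \min\{-M - V_0 + \tfrac{\varepsilon}{M}(W - V_0 - \varepsilon),\ -\tfrac{M}{\varepsilon}(W - W) - W\} = \min\{\,\cdot\,, -W\}$, and the relevant thing is that $f(W)$ involves the value $-W$ coming from the steep descending piece associated to the subset sum $W$.

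The main computation is then: for $b \in [W, W+\varepsilon)$, the function is the minimum of the gently rising piece through the subset sum $V_0 < W$ and the steep piece $-\tfrac{M}{\varepsilon}(b - W) - W$; for $b \in [W + \varepsilon, W + 2\varepsilon)$, with $V_1 = W$ now, $f(b) = -M - W + \tfrac{\varepsilon}{M}(b - W - \varepsilon)$, which at $b^+ = W + 2\varepsilon$ gives $f(W + 2\varepsilon) = -M - W + \tfrac{\varepsilon^2}{M}$. At $b^- = W$ we get $f(W) = \min\{-M - V_0 + \tfrac{\varepsilon}{M}(W - V_0 - \varepsilon),\ -W\}$. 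Since $M = \sum_i w_i + \varepsilon > W$ (using $W \le \sum_i w_i - 1$), we have $-W > -M > -M - W + \tfrac{\varepsilon^2}{M}$; and the other term $-M - V_0 + \tfrac{\varepsilon}{M}(W - V_0 - \varepsilon)$ also exceeds $-M - W + \tfrac{\varepsilon^2}{M}$ because $V_0 < W$ and the correction terms are tiny ($0 \le \tfrac{\varepsilon}{M}(W - V_0 - \varepsilon) < \varepsilon < 1$ while $W - V_0 \ge 1$). Hence $f(b^-) = f(W) > f(W + 2\varepsilon) = f(b^+)$, so $b^* \ne b^+$.

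The step I expect to require the most care is pinning down \emph{which} case of Lemma~\ref{lemma_shape_f} applies at $b = W$ and correctly identifying the relevant consecutive subset sums $V_0 < W$ straddling the point — in particular making sure that the ``steep'' piece value $-W$ is indeed present at $b = W$ (it is, as the left endpoint of the interval $[W, W+\varepsilon)$ in the second case with $V_2 = W$), and then checking that \emph{both} expressions in that minimum strictly exceed $f(b^+)$. Once the piecewise description is correctly instantiated, the inequalities are immediate from $M > W \ge 1$ and $\varepsilon = \tfrac14$. The optimistic/pessimistic distinction does not arise here since all ratios $c_i^\pm/a_i$ were arranged to be distinct.
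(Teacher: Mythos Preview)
Your argument is correct and follows essentially the same route as the paper: identify the largest subset sum $V_0<W$, apply Lemma~\ref{lemma_shape_f} with $V_1=V_0$, $V_2=W$ to compute $f(W)$ as the minimum of the two expressions, apply it again with $V_1=W$ to get $f(W+2\varepsilon)=-M-W+\tfrac{\varepsilon^2}{M}$, and check that both candidates for $f(W)$ strictly exceed this. The only difference is presentation: the paper states the two values and says they are ``easily seen'' to be larger, while you spell out the comparisons explicitly (and correctly).
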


\begin{proof}
  Let~$V$ be the largest subset sum with~$V<W$. We obtain~$f(b^+)=f(W
  + 2 \varepsilon) = -M-W+\tfrac{\varepsilon^2}M$ by applying
  Lemma~\ref{lemma_shape_f} to~$V_1=W$.  By Lemma~\ref{lemma_shape_f}
  applied to~$V_1=V$ and~$V_2=W$, the two possible values of~$f$
  in~$b^- = W$ are $-M-V+\frac{\varepsilon}{M}(W-V-\varepsilon)$ and
  $-W$.  Both can be easily seen to be strictly larger than~$f(b^+)$.
\end{proof}

This concludes the proof of
Theorem~\ref{thm_discreteuncorrelated}. Note that this also proves the
NP-hardness of the adversary's problem:

\begin{theorem}
  Evaluating the objective function value of the robust bilevel
  continuous knapsack problem with an uncertainty set being the
  product of finite sets, in some feasible point~$b$, is NP-hard, even
  if each of the sets has size two.
\end{theorem}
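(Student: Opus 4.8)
The plan is to reuse, unchanged, the polynomial reduction from subset sum built in the proof of Theorem~\ref{thm_discreteuncorrelated}, and to observe that the answer to the subset sum instance is already determined by a \emph{single} value of the leader's objective function $f$, namely $f(b^+)=f(W+2\varepsilon)$. Since evaluating $f$ in a fixed feasible point $b$ is exactly the adversary's problem, this yields NP-hardness of evaluating the objective in a given feasible point, even for uncertainty sets that are products of two-element sets. (As already noted in the construction, the values $c_i^\mp/a_i$ are pairwise distinct, so the optimistic and pessimistic variants coincide here and need not be treated separately.)

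First I would check that $b^+=W+2\varepsilon$ lies in the feasibility range, i.e.\ $\varepsilon\le b^+<M$, so that Lemma~\ref{lemma_shape_f} applies at this point. If the subset sum instance is a yes instance, Lemma~\ref{lemma_yes} gives $f(b^+)=-M-W+\tfrac{\varepsilon^2}{M}$. If it is a no instance, Lemma~\ref{lemma_no} says that $f$ is linear with positive slope on $[b^-,b^+]$, and reading off the relevant linear piece from Lemma~\ref{lemma_shape_f} with $V_1$ the largest subset sum strictly below $W$ yields $f(b^+)=-M-V_1+\tfrac{\varepsilon}{M}(W+\varepsilon-V_1)$.

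The key step is to separate these two regimes by a threshold independent of the (unknown) quantity $V_1$. In a no instance $V_1\le W-1$ by integrality, so $f(b^+)>-M-W+1$, whereas in a yes instance $f(b^+)=-M-W+\tfrac{\varepsilon^2}{M}<-M-W+1$ because $\varepsilon=\tfrac14$ and $M=\sum_i w_i+\varepsilon>1$. Hence the subset sum instance is a yes instance if and only if $f(b^+)<-M-W+1$, and any polynomial-time algorithm evaluating the leader's objective in a given feasible capacity would decide subset sum in polynomial time. The one point requiring care --- and the main potential obstacle --- is precisely this uniform separation: one must verify that \emph{every} no-instance value of $f(b^+)$, as $V_1$ ranges over all subset sums below $W$, stays strictly above the threshold, which is exactly what $V_1\le W-1$ combined with the smallness of $\varepsilon$ delivers.
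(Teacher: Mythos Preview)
Your proposal is correct and follows essentially the same route as the paper: reuse the reduction from Theorem~\ref{thm_discreteuncorrelated} and observe that the single value $f(b^+)=f(W+2\varepsilon)$ already distinguishes yes from no instances via Lemma~\ref{lemma_shape_f}. Your explicit threshold $-M-W+1$ (using $V_1\le W-1$ by integrality and $\varepsilon^2/M<1$) is a slightly more detailed version of the paper's observation that the no-instance value is strictly larger than the yes-instance value, but the underlying argument is the same.
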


\begin{proof}
  Looking at the proof of Theorem~\ref{thm_discreteuncorrelated}, the
  two cases in Lemma~\ref{lemma_no} and Lemma~\ref{lemma_yes} can also
  be distinguished by computing the adversary's optimal solution value
  (or the leader's objective function value, equivalently) in the
  point $b = b^+ = W + 2 \varepsilon$. In case of a yes instance it
  is $$f(b) = -M - W + \frac{\varepsilon}{M} (b - W - \varepsilon),$$
  while in case of a no instance it is $$f(b) = -M - V +
  \frac{\varepsilon}{M} (b - V - \varepsilon),$$ where $V$ is the
  largest subset sum with $V < W$, which leads to the second value
  being larger than the first one. This shows that there is also a
  reduction from the subset sum problem to the adversary's problem.
\end{proof}

In Section~\ref{section_intervaluncertainty}, we have seen that the
robust bilevel continuous knapsack problem can be solved efficiently
when each coefficient is chosen independently from a given
interval. Theorem~\ref{thm_discreteuncorrelated} shows that the same
problem turns NP-hard when the adversary is only allowed to choose the
follower's objective coefficients from the endpoints of the
intervals. In particular, this implies that replacing an uncertainty
set by its convex hull may change the problem significantly, in
contrast to the situation in single-level robust optimization.
This can also be shown by the following explicit example.

\begin{example}\label{ex}
  Let $n=5$ and define $a = (1, 1, 1, 1, 1)^\top$, $b^- = 0$, $b^+ =
  5$, $d = (2,-1,1,-2,0)^\top$, and $U = \{5\} \times \{4\} \times \{3\} \times \{2\} \times \{1,
  6\}$. In this instance, the order of
  the items $1$, $2$, $3$, and $4$ is fixed, while item $5$ could be
  in the first or last position with respect to uncertainty set $U$,
  but also in every position in between the other items when the
  uncertainty set is~$\text{conv}(U)=\{5\} \times \{4\} \times \{3\}
  \times \{2\} \times [1, 6]$. The leader's objective function
  on~$[0,5]$ is depicted in Figure~\ref{fig_ex}, for uncertainty
  sets~$U$ and~$\text{conv}(U)$.

  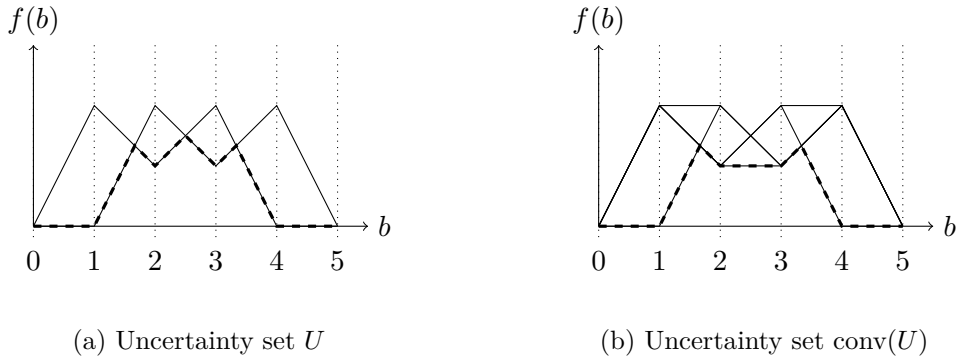
\begin{figure}
    \begin{subfigure}[c]{0.5\textwidth}
      \begin{center}
        \begin{tikzpicture}[scale=0.8]
          \draw[->] (0,0) -- (5.5,0) node[anchor=west] {$b$};
          \draw[->] (0,0) -- (0,3) node[anchor=south] {$f(b)$};
          \draw[dotted] (0,3) -- (0,-0.25) node[anchor=north] {$0$};
          \draw[dotted] (1,3) -- (1,-0.25) node[anchor=north] {$1$};
          \draw[dotted] (2,3) -- (2,-0.25) node[anchor=north] {$2$};
          \draw[dotted] (3,3) -- (3,-0.25) node[anchor=north] {$3$};
          \draw[dotted] (4,3) -- (4,-0.25) node[anchor=north] {$4$};
          \draw[dotted] (5,3) -- (5,-0.25) node[anchor=north] {$5$};
          \draw (0,0) -- (1,2) -- (2,1) -- (3,2) -- (4,0) -- (5,0);
          \draw (0,0) -- (1,0) -- (2,2) -- (3,1) -- (4,2) -- (5,0);
          \draw[very thick, dashed] (0,0) -- (1,0) -- (5/3,3-5/3) -- (2,1) -- (5/2,3/2) -- (3,1) -- (5-5/3,3-5/3) -- (4,0) -- (5,0);
        \end{tikzpicture}
      \end{center}
      \subcaption{Uncertainty set $U$}
    \end{subfigure}
    \begin{subfigure}[c]{0.5\textwidth}
      \begin{center}
        \begin{tikzpicture}[scale=0.8]
          \draw[->] (0,0) -- (5.5,0) node[anchor=west] {$b$};
          \draw[->] (0,0) -- (0,3) node[anchor=south] {$f(b)$};
          \draw[dotted] (0,3) -- (0,-0.25) node[anchor=north] {$0$};
          \draw[dotted] (1,3) -- (1,-0.25) node[anchor=north] {$1$};
          \draw[dotted] (2,3) -- (2,-0.25) node[anchor=north] {$2$};
          \draw[dotted] (3,3) -- (3,-0.25) node[anchor=north] {$3$};
          \draw[dotted] (4,3) -- (4,-0.25) node[anchor=north] {$4$};
          \draw[dotted] (5,3) -- (5,-0.25) node[anchor=north] {$5$};
          \draw (0,0) -- (1,2) -- (2,1) -- (3,2) -- (4,0) -- (5,0);
          \draw (0,0) -- (1,0) -- (2,2) -- (3,1) -- (4,2) -- (5,0);
          \draw (0,0) -- (1,2) -- (2,1) -- (3,2) -- (4,2) -- (5,0);
          \draw (0,0) -- (1,2) -- (2,1) -- (3,1) -- (4,2) -- (5,0);
          \draw (0,0) -- (1,2) -- (2,2) -- (3,1) -- (4,2) -- (5,0);
          \draw[very thick, dashed] (0,0) -- (1,0) -- (5/3,3-5/3) -- (2,1) -- (3,1) -- (5-5/3,3-5/3) -- (4,0) -- (5,0);
        \end{tikzpicture}
      \end{center}
      \subcaption{Uncertainty set $\text{conv}(U)$}
    \end{subfigure}
    \caption{The leader's objective function in Example~\ref{ex}}\label{fig_ex}
  \end{figure}

  In the former case, the leader's unique optimal solution
  is~$b=\tfrac 52$ with objective value~$\tfrac 32$, while in the
  latter case the two optimal solutions are~$b=\tfrac 53$ and $b=\tfrac{10}3$
  with objective value~$\tfrac 43$.
\end{example}

\section{Simplicial uncertainty}\label{section_simplicialuncertainty}

We next consider uncertainty sets being simplices and again show that
the problem is NP-hard in this case. This case is of interest because
it can be considered a special case of two commonly used types
of uncertainty in robust optimization: polytopal and
Gamma-uncertainty. Moreover, it can be viewed as the convex variant of the
discrete uncertainty case investigated in Section~\ref{section_scenariouncertainty}.
We will write the uncertainty set in the following as
$$\textstyle U_{\hat c,\Gamma} = \{c \in \mathbb{R}^n \colon c_i \geq \hat{c}_i \text{ for all } i \in \{1, \dots, n\}, \sum_{i = 1}^n (c_i - \hat{c}_i) \leq \Gamma\},$$
where a vector $\hat{c} \in \mathbb{R}^n_{> 0}$ and a number $\Gamma >
0$ bounding the deviation from $\hat{c}$ are given. We have

\begin{theorem} \label{thm_simplex}
  The robust bilevel continuous knapsack problem with simplicial
  uncertainty set~$U_{\hat c,\Gamma}$, where $\hat{c}$ and $\Gamma$
  are part of the input, is NP-hard.
\end{theorem}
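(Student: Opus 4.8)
The plan is a reduction from subset sum, structured like the proof of Theorem~\ref{thm_discreteuncorrelated}, but with the deviation bound $\Gamma$ of the simplex taking over the role played by the independent per-coordinate choices in the discrete uncorrelated case. Given a subset sum instance $w_1,\dots,w_m,W$ with $1\le W\le\sum_{i=1}^m w_i-1$, I would use $n=m+2$ items: a tiny item~$1$ that is very bad for the leader ($a_1$ a small $\varepsilon$-type quantity, $d_1$ very negative, of order $\sum_{i=1}^m w_i$) but whose base ratio $\hat c_1/a_1$ is so large that it is packed first in every relevant follower sorting at no cost to the adversary; $m$ gadget items $2,\dots,m+1$ with $a_i=w_i$, $d_i=-w_i$, and a common base ratio $\hat c_i/a_i=\rho_0$; and a large, mildly leader-favourable item~$n$ ($a_n$ large, $d_n$ a small positive $\varepsilon$-type quantity) whose base ratio equals $\rho_0+1$, i.e.\ $\hat c_n=(\rho_0+1)a_n$, with $\hat c_1/a_1$ much larger than $\rho_0+1$. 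The capacity window $[b^-,b^+]$ is a short interval around $W$ as in Theorem~\ref{thm_discreteuncorrelated}, and I would set $\Gamma:=W$.

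The key structural fact, the analogue of Lemma~\ref{lemma_binary}, is that with these parameters the adversary's only profitable use of the budget is to move gadget items above item~$n$ in the follower's sorting, and doing so for item~$i$ costs exactly $w_i$: since gadget item~$i$ starts at ratio $\rho_0$ and item~$n$ sits at ratio $\rho_0+1$ with $a_i=w_i$, raising $c_i$ up to the tie $c_i/a_i=\rho_0+1$ costs $(\rho_0+1)a_i-\hat c_i=w_i$, and in the pessimistic setting the tie is broken so that item~$i$ precedes item~$n$; any smaller increase leaves item~$i$ below item~$n$ and hence has no effect on the sorting, so it is wasted. The adversary has no incentive to raise $c_1$ (already on top, and overtaking it would cost far more than $\Gamma$) or $c_n$ (raising the leader-favourable item only helps the leader). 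Hence an optimal adversary response corresponds to a subset $S\subseteq\{1,\dots,m\}$ with $\sum_{i\in S}w_i\le\Gamma=W$: the follower packs item~$1$, then the items of~$S$, then a fraction of item~$n$, and the resulting follower solution is integral on the gadget items. Carrying this through yields a description of the leader's objective $f$ on the relevant range parallel to Lemma~\ref{lemma_shape_f}, with the largest affordable subset sum $V:=\max\{\sum_{i\in S}w_i : \sum_{i\in S}w_i\le W\}$ playing the distinguishing role; note that here the budget cap $\Gamma=W$ and the capacity cap interact, the former becoming binding only near the top of the window $b=b^+$.

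The remainder is the same bookkeeping as in Lemmas~\ref{lemma_no} and~\ref{lemma_yes}. If the instance is a no-instance, then $V\le W-1$, the adversary's packed gadget weight is $V$ throughout $[b^-,b^+]$, and $f$ comes out linear and strictly increasing there, so the leader's optimum is $b^*=b^+$. If it is a yes-instance, then once $b$ is large enough the adversary can afford and fit a subset of weight exactly $W=\Gamma$, which it prefers; this produces a downward jump in $f$, so $f(b^-)>f(b^+)$ and $b^*\ne b^+$. Thus solving the robust bilevel problem — indeed, already comparing $f(b^+)$ with the value $f$ takes on its initial linear piece, i.e.\ evaluating the leader's objective at one or two points — decides subset sum in polynomial time, which proves NP-hardness, and, as in the discrete uncorrelated case, also NP-hardness of merely evaluating the leader's objective function.

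I expect the main obstacle to be exactly the passage from the continuum of the simplex to these clean $0/1$ promotions: one has to verify rigorously that spreading the budget fractionally over several gadget items, or raising $c_1$, $c_n$, or combinations thereof, never gives the adversary a strictly better worst case than promoting a subset of gadget items outright. The choices $a_i=w_i$, $\hat c_i/a_i=\rho_0$ and $\hat c_n/a_n=\rho_0+1$ are tailored so that a sub-threshold increase is literally without effect on the follower's sorting, which collapses the adversary's problem to a subset-selection problem; but one still has to check that the $\varepsilon$-type perturbations in $a_1$, $d_1$, $d_n$ and in $[b^-,b^+]$ leave no unintended slack. This verification, together with the explicit computation of $f$, is the technical core, and the rest is a routine adaptation of the computational lemmas behind Theorem~\ref{thm_discreteuncorrelated}.
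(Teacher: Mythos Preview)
Your plan is sound and would go through: the crucial observation that a sub-threshold increase of~$c_i$ leaves the follower's sorting unchanged, so that the adversary's problem collapses to selecting a subset $S$ with $\sum_{i\in S}w_i\le\Gamma=W$, is exactly the right mechanism. With the narrow window $[b^-,b^+]$ around~$W$ and the tiny item~$1$, the case distinction you sketch (no-instance: $f$ strictly increasing on the window; yes-instance: $f(b^-)>f(b^+)$) can indeed be carried out along the lines of Lemmas~\ref{lemma_shape_f}--\ref{lemma_yes}.

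However, the paper's own proof takes a noticeably simpler route. It uses only $n=m+1$ items (no tiny item~$1$), sets $a=(w_1,\dots,w_m,M)$, $d=(-w_1,\dots,-w_m,M)$ with $M=\sum w_i+1$, and, crucially, takes the \emph{full} capacity range $[b^-,b^+]=[0,\sum_i a_i]$ instead of a narrow window. With these choices each $f_c$ is a simple tent: it descends with slope~$-1$ to $b=V_c$, rises with slope~$+1$ to a peak of height $M-V_c$ at $b=V_c+M$, and descends again. One checks that the pointwise minimum over $c\in U_{\hat c,\Gamma}$ is precisely $f_{c^*}$ for $c^*$ maximizing $V_c$, so the leader's optimal value is $M-V_{c^*}$, which directly encodes the largest subset sum not exceeding~$W$. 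Yes-instance iff $V_{c^*}=W$.

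The trade-off: your approach recycles the machinery of Theorem~\ref{thm_discreteuncorrelated} almost verbatim, which is conceptually economical but drags along the $\varepsilon$-bookkeeping and the extra gadget item. The paper's approach exploits the fact that here the leader has the whole range $[0,\sum a_i]$ at her disposal, so one can read off $V_{c^*}$ from the location (or height) of the global maximum of~$f$ without any window trick; this eliminates the tiny item, the $\varepsilon$'s, and the analogue of Lemma~\ref{lemma_binary} entirely. The same simplification makes the hardness of evaluating~$f$ (Theorem~\ref{theorem_simplex_eval}) a one-line corollary: the instance is a yes-instance iff $f(W)=-W$.
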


\begin{proof}
  We again show this by a reduction from the subset sum problem. Let
  $m \in \mathbb{N}$ and $w_1, \dots, w_m, W \in \mathbb{N}$ be an
  instance of subset sum. We show that we can decide if there is a
  subset $S \subseteq \{1, \dots, m\}$ with $\sum_{i \in S} w_i = W$
  in polynomial time if the following instance of the robust bilevel
  continuous knapsack problem can be solved in polynomial time:
  define~$n:=m+1$, $M:=\sum_{i = 1}^m w_i + 1$ and
  \begin{eqnarray*}
    a &:=& (w_1, \dots, w_m, M)^\top\\
    d &:=& (-w_1, \dots, -w_m, M)^\top\\
    b^- &:=& 0\\
    b^+ &:=& \sum_{i = 1}^n a_i = \sum_{i = 1}^{m}w_i + M\\
    \hat{c} &:=& ((2M - 1) w_1, \dots, (2M - 1) w_m, 2M^2)^\top\\
    \Gamma &:=& W\;.
  \end{eqnarray*}
  As in the discrete uncertainty case, we may imagine the leader's
  objective function~$f$ as the pointwise minimum of all piecewise
  linear functions~$f_c$ for~$c \in U_{\hat c,\Gamma}$. Note that,
  although there are infinitely many $c \in U_{\hat c,\Gamma}$, the
  number of distinct functions $f_c$ is finite since the functions
  depend only on the follower's sorting of the items induced by
  $c$. Since $d_i/a_i = -1$ for all $i = 1, \dots, m$ and $d_{m +
    1}/a_{m + 1} = 1$, all functions~$f_c$ have the structure shown in
  Figure~\ref{figure_np2}, where $V_c := \sum_{i \in I_c} a_i =
  \sum_{i \in I_c} w_i$ for
  $$I_c := \{i \in \{1, \dots, m\} \colon c_i/a_i \geq
  c_{m+1}/a_{m+1}\}.$$ The leader's objective function is the
  pointwise minimum of the functions~$f_c$. It follows easily that it
  agrees with~$f_{c^*}$ where~$c^*\in\argmax_{c\in U_{\hat
      c,\Gamma}}V_c$. In particular, since
  $$f(V_{c^*} + M) = -V_{c^*}
  + M > 0 = f(0)\;,$$
  the leader's optimal solution is~$V_{c^*}+M$ with
  value~$-V_{c^*}+M$, so that it remains to show that by
  computing~$V_{c^*}$ we can decide the subset sum instance.
  \begin{figure}
    \begin{center}
      \begin{tikzpicture}[scale=0.65]
        \draw[->] (0,0) -- (8,0) node[anchor=west] {$b$};
        \draw[->] (0,-2) -- (0,4) node[anchor=south] {$f_c(b)$};
        \draw[dotted] (1,4) -- (1,-1.25) node[anchor=north] {$V_c$};
        \draw[dotted] (5,4) -- (5,-1.25) node[anchor=north] {$V_c+M$};
        \draw[dotted] (7,4) -- (7,-1.25) node[anchor=north] {$b^*$};
        \draw[dotted] (8,-1) -- (-0.25,-1) node[anchor=east] {$-V_c$};
        \draw[dotted] (8,3) -- (-0.25,3) node[anchor=east] {$-V_c+M$};
        \draw[dotted] (8,1) -- (-0.25,1) node[anchor=east] {$-\sum_{i = 1}^m w_i + M$};
        \draw[thick] (0,0) -- (1,-1) -- (5,3) -- (7, 1);
      \end{tikzpicture}
    \end{center}
    \caption{The function~$f_c$}\label{figure_np2}
  \end{figure}
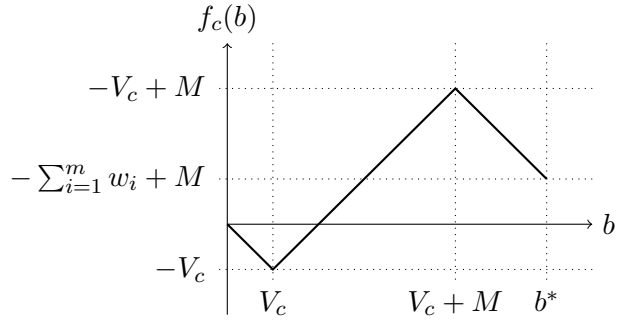

  Since $\hat{c}_i/a_i = 2M - 1 < 2M = \hat{c}_{m + 1}/a_{m + 1}$ for
  all $i = 1, \dots, m$, the adversary has two options for every item
  $i$:
  \begin{enumerate}
  \item Either he decides to include $i$ in the set $I_c$, which can
    be achieved by shifting the value $\hat{c}_i/a_i$ to the right by
    $1$ up to $\hat{c}_{m + 1}/a_{m + 1}$, by choosing
    $c_i = \hat{c}_i + a_i = \hat{c}_i + w_i$ and thus paying
    $c_i - \hat{c}_i = w_i$ in the constraint
    $\sum_{i = 1}^m (c_i - \hat{c}_i) \leq \Gamma$ on the uncertainty
    set. Note that there is no incentive to choose $c_i$ greater than
    that, as it would not change the leader's objective function value.
  \item Or he decides not to include $i$ in the set $I_c$, for which
    it is most efficient in terms of the $\Gamma$-constraint to choose
    $c_i = \hat{c}_i$.
  \end{enumerate}
  Note that there is no incentive to choose $c_{m + 1} > \hat{c}_{m +
    1}$. Therefore, we may assume that $c_i \in \{\hat{c}_i, \hat{c}_i
  + w_i\}$ for all $i \in \{1, \dots, m\}$ and $c_{m + 1} = \hat{c}_{m
    + 1}$ hold in an optimal adversary's solution. By
  maximizing~$V_c=\sum_{i \in I_c} w_i$, we can thus compute the
  largest subset sum~$V$ with $V \leq W$, since $\sum_{i \in I_c} w_i
  = \sum_{i = 1}^m (c_i - \hat{c}_i) \leq \Gamma = W$. Thus, the
  subset sum instance is a yes instance if and only if $V_{c^*} = W$.
\end{proof}

\begin{remark}\label{rem_simplex}
  In the proof of Theorem~\ref{thm_simplex}, we used the assumption of
  the pessimistic approach, because item~$i$ was included in the
  set~$I_c$ even if $c_i / a_i = c_{m + 1} / a_{m + 1}$. In the
  optimistic approach, item~$i$ may be packed only if strict
  inequality holds in the definition of~$I_c$. Without changing the
  effect of the constraint $\sum_{i = 1}^m (c_i - \hat{c}_i) \leq
  \Gamma$, this could be modeled by choosing~$\hat{c}_{m + 1}$
  slightly smaller than~$2 M^2$ in the definition of the instance.
\end{remark}

\pagebreak[3]
\noindent
Again, also the adversary's problem is NP-hard:

\begin{theorem}\label{theorem_simplex_eval}
  Evaluating the objective function value of the robust bilevel
  continuous knapsack problem with simplicial uncertainty set $U_{\hat
    c,\Gamma}$, where $\hat{c}$ and $\Gamma$ are part of the input, in
  some feasible point $b$, is NP-hard.
\end{theorem}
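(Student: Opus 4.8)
The plan is to reuse, without any modification, the reduction from subset sum constructed in the proof of Theorem~\ref{thm_simplex}, and to extract the answer from a single evaluation of the leader's objective function at a suitably chosen feasible capacity. The natural candidate is $b=M$, which is feasible since $b^-=0\le M\le\sum_{i=1}^m w_i+M=b^+$. First I would recall from that proof that, for every $c\in U_{\hat c,\Gamma}$, the function $f_c$ has the piecewise linear shape of Figure~\ref{figure_np2}: it decreases with slope $-1$ on $[0,V_c]$, increases with slope $+1$ on $[V_c,V_c+M]$, and decreases again on $[V_c+M,b^+]$, where $V_c=\sum_{i\in I_c}w_i\le\Gamma=W<M$.

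Next I would observe that, precisely because $V_c\le W<M$, the point $b=M$ lies in the rising segment $[V_c,V_c+M]$ of $f_c$ for \emph{every} admissible $c$, so that $f_c(M)=-V_c+(M-V_c)=M-2V_c$. Since the leader's objective function $f$ is the pointwise minimum of all the $f_c$, this yields
\[
f(M)=\min_{c\in U_{\hat c,\Gamma}}f_c(M)=M-2\max_{c\in U_{\hat c,\Gamma}}V_c=M-2V_{c^*}\,.
\]
As established in the proof of Theorem~\ref{thm_simplex}, $V_{c^*}$ is exactly the largest subset sum not exceeding $W$, hence $V_{c^*}=W$ if and only if the subset sum instance is a yes-instance. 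I would therefore conclude that the instance is a yes-instance if and only if $f(M)=M-2W$, which can be decided once the value $f(M)$ has been computed. This constitutes a polynomial-time reduction from subset sum to the problem of evaluating the leader's objective at the fixed feasible point $b=M$, proving NP-hardness.

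Finally, I would note that the optimistic version is handled by the same slight perturbation of $\hat c_{m+1}$ as in Remark~\ref{rem_simplex}, so the statement holds there as well. The only delicate point in the argument is the claim that the single point $b=M$ falls into the same (rising) linear piece of every $f_c$ simultaneously; everything else is bookkeeping already carried out for Theorem~\ref{thm_simplex}. This is not a genuine obstacle, but it is the one place where the specific numbers of the reduction, in particular the inequality $M>\sum_{i=1}^m w_i\ge W$, are actually used, so I would make sure to spell it out.
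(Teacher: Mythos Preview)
Your proposal is correct and takes essentially the same approach as the paper: you reuse the reduction from Theorem~\ref{thm_simplex} and observe that a single evaluation of $f$ at a point lying in the common rising segment of every $f_c$ reveals $V_{c^*}$ and hence decides the subset-sum instance. The paper evaluates at $b=W$ (obtaining $f(W)=W-2V_{c^*}$, so the instance is a yes-instance iff $f(W)=-W$) whereas you evaluate at $b=M$, but the mechanism and the justification---in particular the key inequality $V_c\le W<M$ ensuring the chosen point sits in the ascending piece of each $f_c$---are identical.
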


\begin{proof}
  Looking at the proof of the previous theorem, it can be seen that
  the subset sum instance is a yes instance if and only if $f(W) = -
  W$, so the subset sum problem can also be solved by computing the
  objective function value at $b = W$.  
\end{proof}

\noindent
The uncertainty set~$U_{\hat c,\Gamma}$ is a polytope defined
explicitly by~$n+1$ linear inequalities. In particular, this shows
NP-hardness for polytopal uncertainty sets given by an outer
description:

\begin{corollary}
  The robust bilevel continuous knapsack problem with an uncertainty
  set being a polytope, given by a set of linear inequalities, is NP-hard.
\end{corollary}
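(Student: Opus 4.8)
The plan is to observe that no new construction is required: the simplicial uncertainty set $U_{\hat c,\Gamma}$ used in the proof of Theorem~\ref{thm_simplex} is already a polytope given by an explicit outer description. Writing out the definition, $U_{\hat c,\Gamma}$ is exactly the set of $c \in \mathbb{R}^n$ satisfying the $n$ inequalities $-c_i \le -\hat c_i$ for $i \in \{1,\dots,n\}$ together with the single inequality $\sum_{i=1}^n c_i \le \Gamma + \sum_{i=1}^n \hat c_i$, i.e., a system of $n+1$ linear inequalities. Hence the instance of the robust bilevel continuous knapsack problem constructed from a subset sum instance in the proof of Theorem~\ref{thm_simplex} is simultaneously an instance of the variant in which the uncertainty set is a polytope specified by linear inequalities, and the reduction given there applies verbatim.

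Concretely, I would carry out three short steps. First, record the explicit inequality description of $U_{\hat c,\Gamma}$ above and note that its $n+1$ facet coefficients are determined by $\hat c$ and $\Gamma$, which in the reduction of Theorem~\ref{thm_simplex} are integers polynomially bounded in $w_1,\dots,w_m,W$; thus the inequality system can be written down in polynomial time and has polynomial encoding length. Second, verify that $U_{\hat c,\Gamma}$ is bounded, so that it is genuinely a polytope as the corollary requires: the constraints $c_i \ge \hat c_i$ and $\sum_{i=1}^n (c_i - \hat c_i) \le \Gamma$ together force $\hat c_i \le c_i \le \hat c_i + \Gamma$ for every $i$. Third, invoke Theorem~\ref{thm_simplex}: since that reduction is correct and its uncertainty set is now exhibited as a polytope in outer description, the more general problem is NP-hard as well.

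There is essentially no obstacle here; the statement is a direct specialization of Theorem~\ref{thm_simplex} to a broader problem class, and the only point deserving a sentence is the bookkeeping that the $n+1$ inequalities have polynomial size and that the feasible region is bounded. An alternative, fully self-contained write-up would simply repeat the subset sum reduction and list the $n+1$ inequalities directly, but this only duplicates the proof of Theorem~\ref{thm_simplex}, so the two-line specialization is the cleaner route.
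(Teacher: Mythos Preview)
Your proposal is correct and matches the paper's own argument essentially verbatim: the paper simply observes that $U_{\hat c,\Gamma}$ is a polytope defined explicitly by $n+1$ linear inequalities, so Theorem~\ref{thm_simplex} immediately yields the corollary. Your additional remarks on polynomial encoding length and boundedness are fine supplementary bookkeeping but are not needed beyond what the paper states.
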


Theorem~\ref{thm_simplex} also implies that the problem is NP-hard
under Gamma-uncertainty, which is sometimes also called budgeted
uncertainty~\cite{BertsimasSim}. However, this is only true in case
the total amount of deviation is bounded. Since uncertainty sets
cannot be replaced equivalently by their convex hulls in the bilevel
case, the above result does not necessarily hold for the original
definition of Gamma-uncertainty, where the number of deviating entries
is bounded.

It is easy to check that~$U_{\hat c,\Gamma}$ agrees with
$\text{conv}\{\hat{c}, \hat{c} + \Gamma e_1, \dots, \hat{c} + \Gamma
e_n\}$, where~$e_i$ denotes the $i$-th unit vector.  This proves

\begin{corollary}\label{cor_convhull}
  The robust bilevel continuous knapsack problem with an uncertainty
  set being the convex hull of a finite set of vectors, which are
  explicitly given as part of the input, is NP-hard.
\end{corollary}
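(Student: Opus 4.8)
The plan is to derive this corollary directly from Theorem~\ref{thm_simplex} together with the polyhedral observation stated just above it, namely that $U_{\hat c,\Gamma} = \text{conv}\{\hat c,\hat c+\Gamma e_1,\dots,\hat c+\Gamma e_n\}$. So the only real work is to verify that identity and to check that the resulting reduction stays polynomial.

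First I would verify the polyhedral identity. The inclusion $\text{conv}\{\hat c,\hat c+\Gamma e_1,\dots,\hat c+\Gamma e_n\}\subseteq U_{\hat c,\Gamma}$ is immediate: each generator satisfies $c_i\ge\hat c_i$ and $\sum_i(c_i-\hat c_i)\in\{0,\Gamma\}\le\Gamma$, and both defining inequality systems of $U_{\hat c,\Gamma}$ are preserved under convex combinations. For the reverse inclusion, given $c\in U_{\hat c,\Gamma}$ set $t_i:=(c_i-\hat c_i)/\Gamma\ge 0$ (here $\Gamma=W\ge 1$ in the reduction, so this is well defined); then $\sum_i t_i\le 1$ by the budget constraint, and $c=(1-\sum_i t_i)\,\hat c+\sum_i t_i\,(\hat c+\Gamma e_i)$ exhibits $c$ as a convex combination of the $n+1$ generators.

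Second I would observe that the reduction is polynomial. In the subset-sum reduction built in the proof of Theorem~\ref{thm_simplex} we have $M=\sum_{i=1}^m w_i+1$, $\hat c=((2M-1)w_1,\dots,(2M-1)w_m,2M^2)^\top$ and $\Gamma=W$, all of encoding size polynomial in the input. Hence the $n+1$ vectors $\hat c,\hat c+\Gamma e_1,\dots,\hat c+\Gamma e_n$ can be written down explicitly in polynomial time, and they describe exactly the uncertainty set $U_{\hat c,\Gamma}$ used there. Therefore the very same instance, with the uncertainty set now presented as this list of generating vectors, realizes a polynomial reduction from subset sum, and the case distinction of Theorem~\ref{thm_simplex} (yes instance iff $V_{c^*}=W$) applies verbatim.

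I do not expect any genuine obstacle here: the corollary is an immediate consequence of Theorem~\ref{thm_simplex}. The only points requiring a modicum of care are exhibiting the explicit convex-combination coefficients $t_i$ above so the identity $U_{\hat c,\Gamma}=\text{conv}\{\dots\}$ is not merely asserted, and noting that the generators remain polynomially bounded so that the reduction is indeed efficient.
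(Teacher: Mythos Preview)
Your proposal is correct and follows exactly the paper's approach: the paper simply notes that $U_{\hat c,\Gamma}=\text{conv}\{\hat c,\hat c+\Gamma e_1,\dots,\hat c+\Gamma e_n\}$ and deduces the corollary from Theorem~\ref{thm_simplex}. You have merely spelled out this identity and the polynomial-size check more carefully than the paper does.
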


It follows from Theorem~\ref{theorem_simplex_eval} that also the
evaluation of the leader's objective function is NP-hard in all cases
mentioned in the last corollaries.

Recall that the problem is tractable if the uncertainty set~$U$ is
finite and given explicitly as part of the input; see
Theorem~\ref{theorem_finite}. Together with
Corollary~\ref{cor_convhull}, this again shows that replacing the
uncertainty set by its convex hull may not only change the optimal
solution, but even the complexity of the problem significantly.

\begin{remark}
  In contrast to the result of Corollary~\ref{cor_convhull}, the
  problem can be solved in polynomial time if the uncertainty set is
  given as the convex hull of a constant number~$k$ of vectors. In
  fact, the number of possible sortings in the follower's solution can
  be bounded by $\mathcal{O}(n^{2 k})$ in this case, and they can be
  enumerated explicitly in polynomial time for constant~$k$. The
  tractability then follows from Theorem~\ref{theorem_finite}.
\end{remark}

\section{Norm uncertainty}\label{section_normuncertainty}

In Section~\ref{section_intervaluncertainty}, we have shown that the
robust bilevel continuous knapsack problem can be solved efficiently
if the uncertainty set is defined componentwise by intervals. This can
be seen as a special case of an uncertainty set defined by a $p$-norm,
$$U^p_{\hat c,\Gamma} := \{c \in \mathbb{R}^n \colon ||c-\hat c||_p\le \Gamma\}\;,$$
where $\hat{c} \in \mathbb{R}_{> 0}^n$, $\Gamma > 0$, and $p=\infty$.
On the other hand, the case~$p=1$ is closely related to the simplicial
case discussed in the previous section, which turned out to be
NP-hard.  In the following, we show NP-hardness for
all~$p\in[1,\infty)$. This is also of interest because the case $p =
2$ corresponds to an ellipsoidal uncertainty set, which is commonly
used in robust optimization.

\begin{theorem}\label{theorem_norm}
  Let $p\in[1,\infty)$. Then the robust bilevel continuous knapsack
  problem with uncertainty set $U^p_{\hat c,\Gamma}$, where $\hat{c}$
  and $\Gamma$ are part of the input, is NP-hard.  This remains true
  under the assumption $U^p_{\hat
    c,\Gamma}\subseteq\mathbb{R}^n_{>0}$.
\end{theorem}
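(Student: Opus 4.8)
The plan is to adapt the reduction from subset sum used in the proof of Theorem~\ref{thm_simplex}, since the $1$-norm ball is closely related to the simplex $U_{\hat c,\Gamma}$, and then to handle general $p\in[1,\infty)$ by a suitable rescaling of the budget and the deviations. As in the simplicial case, I would start from an instance $w_1,\dots,w_m,W$ of subset sum and build an instance with $n=m+1$ items, where the first $m$ items all share the ratio $d_i/a_i=-1$ (taking $a_i=w_i$, $d_i=-w_i$) and item $m+1$ has a large positive ratio (e.g.\ $a_{m+1}=M$, $d_{m+1}=M$ with $M=\sum_i w_i+1$). Exactly as in Figure~\ref{figure_np2}, every function $f_c$ then has the same shape, parametrized by $V_c=\sum_{i\in I_c}w_i$ where $I_c=\{i\le m:c_i/a_i\ge c_{m+1}/a_{m+1}\}$, and the leader's objective is $f_{c^*}$ for $c^*$ maximizing $V_c$; deciding subset sum reduces to checking whether $V_{c^*}=W$, equivalently whether $f(W)=-W$, which also gives the evaluation hardness.

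The core of the argument is to choose $\hat c$ and $\Gamma$ so that, under the $p$-norm constraint $\|c-\hat c\|_p\le\Gamma$, the adversary's best strategy is still to move a subset $S$ of the first $m$ ratios up to meet $c_{m+1}/a_{m+1}$, subject to a ``budget'' that is essentially $\sum_{i\in S}w_i\le W$. Concretely, I would set $\hat c_i/a_i$ just below $\hat c_{m+1}/a_{m+1}$, so that including item $i$ in $I_c$ requires raising $c_i$ by a fixed amount proportional to $a_i=w_i$; with the $p$-norm, the cost of doing this for $i\in S$ is $\bigl(\sum_{i\in S}(\text{const}\cdot w_i)^p\bigr)^{1/p}$, which is \emph{not} linear in the $w_i$. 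To linearize the budget I would instead make the required upward shift of $c_i$ equal to a common value $t$ independent of $i$ (this is possible: what matters is the ratio $c_i/a_i$, and the gap between $\hat c_i/a_i$ and $\hat c_{m+1}/a_{m+1}$ can be made the same for all $i$ by choosing $\hat c_i$ appropriately), but then the effect on $V_c$ is still governed by which items are moved, not by how much. The trick is that the ratio only needs to \emph{reach} $c_{m+1}/a_{m+1}$; partial moves are useless, so in an optimal solution $c_i-\hat c_i\in\{0,\tau_i\}$ for the threshold value $\tau_i$, and the $p$-norm budget becomes $\bigl(\sum_{i\in S}\tau_i^p\bigr)^{1/p}\le\Gamma$. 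Choosing $\tau_i$ so that $\tau_i^p=w_i$ (i.e.\ $\tau_i=w_i^{1/p}$) turns this into $\sum_{i\in S}w_i\le\Gamma^p$, so with $\Gamma:=W^{1/p}$ we recover exactly the subset-sum budget. One then verifies, as in Remark~\ref{rem_simplex} and the simplicial proof, that there is no incentive to raise $c_{m+1}$ or to overshoot any $\tau_i$, that all entries stay positive (ensuring $U^p_{\hat c,\Gamma}\subseteq\mathbb{R}^n_{>0}$ by taking $\hat c$ large enough and $\Gamma$ small relative to $\hat c$), and that the resulting instance has polynomial size and polynomially bounded numbers — here one must check that $w_i^{1/p}$ need not be computed exactly, since only the $p$-th powers enter the comparison $\sum_{i\in S}\tau_i^p\le\Gamma^p$, so the instance can be specified rationally.

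The main obstacle I anticipate is precisely this linearization: for $p\ne 1,\infty$ the norm constraint is genuinely nonlinear, so one must ensure that optimal adversary solutions are ``$0/1$'' in the deviations (each $c_i$ either untouched or pushed exactly to the threshold), and that no advantage is gained by spreading the budget in a fractional way across several coordinates. This follows because the leader's objective $f_c$ depends only on the set $I_c$, hence on the sign pattern of $c_i/a_i - c_{m+1}/a_{m+1}$, and never on the magnitudes; so any feasible $c$ can be replaced by one supported on thresholds without increasing the $p$-norm, and the adversary then faces the discrete problem $\max\{\sum_{i\in S}w_i : \sum_{i\in S}w_i\le W\}$, which is the subset-sum value. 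The remaining steps — writing down $\hat c$ explicitly, checking positivity of the uncertainty set, and noting that the evaluation problem at $b=W$ already separates yes- from no-instances — are routine and parallel to Theorems~\ref{thm_simplex} and~\ref{theorem_simplex_eval}, and the optimistic/pessimistic distinction is handled exactly as in Remark~\ref{rem_simplex} by perturbing $\hat c_{m+1}$.
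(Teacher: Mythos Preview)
Your proposal is correct and follows essentially the same route as the paper: the paper also reuses the instance from Theorem~\ref{thm_simplex} with the single modification $\hat c_i = 2Mw_i - w_i^{1/p}$ for $i\le m$, $\hat c_{m+1}=2M^2$, and $\Gamma = W^{1/p}$, so that the shift needed to push item~$i$ across the threshold is exactly $\tau_i = w_i^{1/p}$ and the $p$-norm budget $\sum_{i\in S}\tau_i^p \le \Gamma^p$ becomes $\sum_{i\in S} w_i \le W$, precisely your key trick.

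One point you glossed over: unlike the simplex $U_{\hat c,\Gamma}$, the ball $U^p_{\hat c,\Gamma}$ is symmetric, so the adversary may also \emph{decrease} coordinates. You mention ruling out ``raising $c_{m+1}$'' and ``overshooting $\tau_i$'', but the nontrivial new case is \emph{lowering} $c_{m+1}$, which would pull the threshold ratio $c_{m+1}/a_{m+1}$ down and make all items cross at once. The paper handles this by a short cost comparison: shifting every ratio $c_i/a_i$, $i\le m$, up by $\varepsilon$ costs $\sum_{i=1}^m(\varepsilon w_i)^p < (\varepsilon M)^p$, the cost of shifting $c_{m+1}/a_{m+1}$ down by the same $\varepsilon$, so lowering $c_{m+1}$ is never optimal. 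You should add this check. Also, your remark that ``only the $p$-th powers enter the comparison'' is not quite enough to dismiss the irrationality of $w_i^{1/p}$: the values $w_i^{1/p}$ appear in $\hat c$ itself, which is part of the instance; the paper notes that a sufficiently good rational approximation suffices because $\hat c$ only influences the set of attainable orderings, not the actual objective values.
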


\begin{proof}
  This can be shown very similarly to Theorem~\ref{thm_simplex}. We
  use the same instance, except that
  $$\hat{c} := (2Mw_1 - w_1^{1/p}, \dots, 2Mw_m - w_m^{1/p},
  2M^2)$$
  and $\Gamma:=W^{1/p}$. By the same reasoning as before, we may assume that an
  adversary's optimal solution satisfies $c_i \in \{\hat{c}_i,
  \hat{c}_i + w_i^{1 / p}\}$ for all~$i \in \{1, \dots, m\}$ and $c_{m
    + 1} = \hat{c}_{m + 1}$. For this, note that, although it is now
  allowed to deviate from $\hat{c}$ in any direction, the adversary
  has no incentive to choose $c_i < \hat{c}_i$ for any $i \in \{1,
  \dots, n\}$ in the given instance. In particular, it is cheaper in
  terms of the constraint $\sum_{i = 1}^n |c_i - \hat{c}_i|^p \leq
  \Gamma^p$ to shift all values $c_1/a_1, \dots, c_m/a_m$ to the right
  by some $\varepsilon > 0$ than moving $c_{m + 1}/a_{m + 1}$ to the
  left by $\varepsilon$, i.e.,
  $$\sum_{i = 1}^m (\varepsilon a_i)^p = \varepsilon^p \sum_{i = 1}^m w_i^p \leq \varepsilon^p (\sum_{i = 1}^m w_i)^p < \varepsilon^p M^p = (\varepsilon a_{m + 1})^p.$$
  The rest of the proof is analogous.  
\end{proof}

In the proof of Theorem~\ref{theorem_norm}, we implicitly assumed that
we can compute $p$-th roots in polynomial time when defining~$\hat c$
and~$\Gamma$. In general, these values cannot be computed or even
represented exactly in a polynomial time algorithm. However, they only
influence the set of possible sortings, not the actual leader's
solution and objective function values. For the former, a sufficiently
good approximation of the $p$-th roots, which can be computed in
polynomial time, leads to the same result.

\begin{remark}
  In the situation of Theorem~\ref{theorem_norm},
  Remark~\ref{rem_simplex} still applies: while the proof is
  formulated for the pessimistic setting, the variation proposed in
  Remark~\ref{rem_simplex} allows to show NP-hardness also in the
  optimistic setting.
\end{remark}

\noindent
In the case~$p=2$, the uncertainty set~$U^2_{\hat c,\Gamma}$ is an
axis-parallel ellipsoid. We thus derive the following result:
\begin{corollary}
  The robust bilevel continuous knapsack problem with (uncorrelated)
  ellipsoidal uncertainty is NP-hard.
\end{corollary}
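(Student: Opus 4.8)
The plan is to obtain this as an immediate specialization of Theorem~\ref{theorem_norm} to the case $p=2$. For $p=2$ the uncertainty set $U^2_{\hat c,\Gamma}=\{c\in\mathbb{R}^n:\|c-\hat c\|_2\le\Gamma\}$ is the Euclidean ball of radius $\Gamma$ centered at $\hat c$. This is in particular an axis-parallel ellipsoid, namely a set of the form $\{c:\sum_{i=1}^n(c_i-\hat c_i)^2/\sigma_i^2\le 1\}$ with all semi-axes equal to $\sigma_i=\Gamma$; such a set involves no correlations between the coordinates of $c$, so it qualifies as an uncorrelated ellipsoidal uncertainty set. Hence the NP-hardness established in Theorem~\ref{theorem_norm} for $p=2$ carries over verbatim, including the additional guarantee $U^2_{\hat c,\Gamma}\subseteq\mathbb{R}^n_{>0}$.

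Concretely, I would invoke Theorem~\ref{theorem_norm} with $p=2$ on the instance constructed there, with $\hat c_i=2Mw_i-\sqrt{w_i}$ for $i\le m$, $\hat c_{m+1}=2M^2$, and $\Gamma=\sqrt W$. The only point needing a word of care is that these data are irrational in general and so cannot be written down exactly in a polynomial-time reduction; but, as already observed in the paragraph following Theorem~\ref{theorem_norm}, the exact values of $\hat c$ and $\Gamma$ influence only the set of follower's sortings that the adversary can realize, not the leader's optimal capacity or objective value, and a polynomially computable rational approximation of the square roots induces the same set of sortings. In particular, the strict inequality $\varepsilon^2\sum_{i=1}^m w_i^2<\varepsilon^2M^2$ used in that proof has a fixed positive margin that is not destroyed by a sufficiently fine approximation, so the dichotomy ``an item can be shifted into $I_c$ at cost (approximately) $w_i$, while moving item $m+1$ is never worthwhile'' is preserved.

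I do not expect any real obstacle: the content is entirely in Theorem~\ref{theorem_norm}, and the corollary amounts to recognizing that a Euclidean ball is a (degenerate, uncorrelated) ellipsoid. If one insisted on a genuinely non-spherical uncorrelated ellipsoid, it would suffice to rescale coordinates --- replacing $a$, $d$, $\hat c$ by componentwise-scaled versions --- which turns the ball into an axis-aligned ellipsoid with distinct semi-axes while leaving the reduction intact; but since ``uncorrelated ellipsoidal'' already subsumes the ball, this refinement is not needed for the statement as given.
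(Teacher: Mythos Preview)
Your proposal is correct and matches the paper's own argument: the corollary is stated immediately after the observation that for $p=2$ the set $U^2_{\hat c,\Gamma}$ is an axis-parallel ellipsoid, and no separate proof is given. Your additional remarks on rational approximation of square roots and on rescaling to obtain non-spherical ellipsoids are accurate but go beyond what the paper provides.
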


\noindent
Finally, we can again show NP-hardness even for the problem of
evaluating the leader's objective function. The proof is analogous to
the case of simplicial uncertainty discussed in the previous section.

\begin{theorem}
  Let $p\in[1,\infty)$. Then evaluating the objective function of the
  robust bilevel continuous knapsack problem with uncertainty set
  $U^p_{\hat c,\Gamma}$, where $\hat{c}$ and $\Gamma$ are part of the
  input, is NP-hard.
\end{theorem}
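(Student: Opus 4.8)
The plan is to mimic the proof of Theorem~\ref{theorem_simplex_eval} as closely as possible, now using the instance constructed in the proof of Theorem~\ref{theorem_norm} instead of the one from Theorem~\ref{thm_simplex}. Recall that in that construction the leader's objective function~$f$ coincides with~$f_{c^*}$, where~$c^*$ maximizes~$V_c=\sum_{i\in I_c}w_i$ over the uncertainty set~$U^p_{\hat c,\Gamma}$, and that the optimal adversary's solution may be assumed to satisfy~$c_i\in\{\hat c_i,\hat c_i+w_i^{1/p}\}$ for~$i\le m$ and~$c_{m+1}=\hat c_{m+1}$, so that~$V_{c^*}$ equals the largest subset sum~$V$ with~$V\le W$. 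First I would fix the evaluation point to be~$b=W$, which is feasible since~$b^-=0\le W\le b^+$.

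Next I would read off the value~$f(W)$ from the shape of~$f_c$ depicted in Figure~\ref{figure_np2}. On the interval~$[0,V_c]$ the function~$f_c$ has slope~$d_i/a_i=-1$ for the items~$i\le m$, so~$f(W)=f_{c^*}(W)=-W$ exactly when~$W\le V_{c^*}$, i.e.\ exactly when~$V_{c^*}=W$ (using~$V_{c^*}\le W$ always); otherwise~$W>V_{c^*}$ and the breakpoint at~$b=V_{c^*}$ has already been passed, so that~$f_{c^*}(W)=-V_{c^*}+(W-V_{c^*})\cdot\frac{M}{M}> -W$ (here one uses the positive slope~$d_{m+1}/a_{m+1}=1$ of the segment immediately to the right of~$V_{c^*}$ and the fact that~$W<V_{c^*}+M$, since~$W\le\sum_i w_i<M$). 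Hence~$f(W)=-W$ holds if and only if~$V_{c^*}=W$, which by the analysis in the proof of Theorem~\ref{theorem_norm} is equivalent to the subset sum instance being a yes instance. This gives a polynomial-time reduction from subset sum to the evaluation problem, for every fixed~$p\in[1,\infty)$.

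I do not anticipate a genuine obstacle here; the statement is essentially a corollary of the already-established Theorem~\ref{theorem_norm}, exactly as Theorem~\ref{theorem_simplex_eval} is a corollary of Theorem~\ref{thm_simplex}. The only point requiring a little care is the remark, already made after Theorem~\ref{theorem_norm}, that~$\hat c$ and~$\Gamma$ involve the quantities~$w_i^{1/p}$ and~$W^{1/p}$, which need not be computable or representable exactly; as noted there, a sufficiently accurate polynomial-time approximation suffices, since these values only determine the set of possible follower sortings and not the leader's objective values themselves, so~$f(W)$ is unaffected. The proof can therefore be stated in one or two sentences, pointing to the proof of Theorem~\ref{theorem_norm} and specializing the evaluation point to~$b=W$.

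\begin{proof}
  This follows from the proof of Theorem~\ref{theorem_norm} in the
  same way as Theorem~\ref{theorem_simplex_eval} follows from the
  proof of Theorem~\ref{thm_simplex}. Using the same instance, it
  suffices to evaluate the leader's objective function~$f$ at the
  feasible point~$b=W$. Since~$f=f_{c^*}$ with~$c^*$ maximizing~$V_c$,
  and since~$V_{c^*}$ equals the largest subset sum~$V\le W$, we
  obtain~$f(W)=-W$ if~$V_{c^*}=W$, whereas~$f(W)>-W$ if~$V_{c^*}<W$,
  because then the breakpoint of~$f_{c^*}$ at~$b=V_{c^*}$, where the
  slope changes from~$-1$ to~$1$, lies strictly to the left of~$W$,
  and~$W<V_{c^*}+M$. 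Hence the subset sum instance is a yes instance
  if and only if~$f(W)=-W$, which yields a polynomial-time reduction
  from subset sum to the evaluation problem. As remarked after
  Theorem~\ref{theorem_norm}, the irrational quantities in~$\hat c$
  and~$\Gamma$ only influence the set of possible follower's sortings,
  so a sufficiently accurate polynomial-time approximation of them
  does not change~$f(W)$.
\end{proof}
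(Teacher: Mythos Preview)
Your proposal is correct and follows exactly the approach the paper takes: the paper's own proof consists of the single sentence ``The proof is analogous to the case of simplicial uncertainty discussed in the previous section,'' and your argument spells out precisely that analogy, evaluating at~$b=W$ and using~$f(W)=-W$ if and only if the subset sum instance is a yes instance. The added remark about approximating the $p$-th roots is also appropriate and mirrors the discussion after Theorem~\ref{theorem_norm}.
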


\section{Uncertain item sizes} \label{section_uncertainitemsizes}

While our research was focused on uncertain follower's profits, one
may also consider the variant of the robust bilevel continuous
knapsack problem in which the item sizes~$a$ are uncertain. This was
investigated in~\cite{huegging}, considering different uncertainty
sets similarly to here. In case of discrete uncertainty, the algorithm
presented in Section~\ref{section_scenariouncertainty} could be
transferred directly. Regarding interval uncertainty, different types
of leader's objective functions were distinguished since the variants
(a), (b) and (c) mentioned in Section~\ref{section_withoutuncertainty}
are not equivalent anymore. Moreover, one has to be very careful about
the optimistic setting because with uncertain item sizes, the optimal
value might not be attained in the adversary's problem. Additionally,
one-point intersections of intervals in the optimistic setting are not
understood completely in this problem version yet; they can be
expected to be at least as complicated to handle as in
Remark~\ref{rem_interval}. However, many structural results were
established and an algorithm similar to the one presented in
Section~\ref{section_intervaluncertainty} was shown to solve the
problem under mild additional assumptions. Similarly to the results in
Section~\ref{section_discreteuncorrelateduncertainty} and
Section~\ref{section_simplicialuncertainty}, NP-hardness was also
proved for uncertain item sizes lying in a discrete uncorrelated or a
polytopal uncertainty set, assuming the pessimistic setting in the
latter case.

\section{Conclusion} \label{section_conclusion}

We have started the investigation of robust bilevel optimization by
addressing the bilevel continuous knapsack problem with uncertain
follower's objective. It turned out that standard results from
single-level robust optimization do not hold anymore: firstly, it is
not possible in general to replace an uncertainty set by its convex
hull without changing the problem. Secondly, the case of interval
uncertainty is not trivial anymore. Even if we have shown that it is
still tractable in the case of the bilevel continuous knapsack
problem, we conjecture that in general, the interval case cannot be
reduced to the certain problem anymore.

All hardness results presented in this paper are in the weak
sense. This leaves open the question whether the corresponding problem
variants are actually strongly NP-hard or whether pseudopolynomial
algorithms exist. This is left as future work, together with questions
of approximability.  In view of the remarks in
Section~\ref{section_uncertainitemsizes}, another task left for future
research is the simultaneous consideration of uncertain weights and
profits.  Another obvious research direction could be to consider
other bilevel optimization problems and to investigate how their
complexity increases when taking uncertainty into account.

\bibliographystyle{abbrv}
\bibliography{literature}

\end{document}